\theoremstyle{plain}
\newtheorem{theorem}{Theorem}[section]
\newtheorem{lemma}[theorem]{Lemma}
\newtheorem{claim}[theorem]{Claim}
\newtheorem*{claim*}{Claim}
\newtheorem{corollary}[theorem]{Corollary}
\newtheorem{fact}[theorem]{Fact}
\theoremstyle{definition}
\newtheorem{definition}[theorem]{Definition}
\theoremstyle{remark}
\newtheorem{remark}[theorem]{Remark}
\newcommand{\ProblemName}[1]{\textsc{#1}}
\newcommand{\MaxCut}{\ProblemName{Max-Cut}\xspace}
\DeclareMathOperator*{\E}{\mathbb{E}}
\DeclareMathOperator{\supp}{supp}
\def\RR{{\mathbb{R}}}
\DeclareMathOperator{\poly}{poly}
\DeclareMathOperator{\dist}{dist}
\DeclareMathOperator{\cut}{cut}
\DeclareMathOperator{\anc}{anc}
\DeclareMathOperator*{\nil}{\perp}
\newcommand{\Dall}{D_{\mathrm{all}}}
\newcommand{\Xheavy}{X_{\mathrm{heavy}}}
\newcommand{\Xlight}{X_{\mathrm{light}}}
\newcommand{\len}{\mathrm{len}}
\newcommand{\calG}{\mathcal{G}}
\newcommand{\Xext}{X^{\mathrm{ext}}}
\let\epsilon\varepsilon
\author{
  Xiaoyu Chen\thanks{Email: \texttt{yuchen21@stu.pku.edu.cn}
  }\\
  Peking University
  \and Shaofeng H.-C. Jiang\thanks{
    Research partially supported by a national key R\&D program of China No. 2021YFA1000900,
    a startup fund from Peking University, and the Advanced Institute of Information Technology, Peking University.
    Email: \texttt{shaofeng.jiang@pku.edu.cn}
  }\\
  Peking University
  \and Robert Krauthgamer\thanks{Work partially supported by ONR Award N00014-18-1-2364,
   by a Weizmann-UK Making Connections Grant,
   by a Minerva Foundation grant,
   and the Weizmann Data Science Research Center.
    Email: \texttt{robert.krauthgamer@weizmann.ac.il}
  }\\
  Weizmann Institute of Science
}
\begin{document}

\title{Streaming Euclidean \MaxCut: Dimension vs Data Reduction}
\date{}

\maketitle

\begin{abstract}
\MaxCut is a fundamental problem 
that has been studied extensively in various settings.
We design an algorithm for Euclidean \MaxCut,
where the input is a set of points in $\RR^d$,
in the model of dynamic geometric streams, 
where the input $X\subseteq [\Delta]^d$
is presented as a sequence of point insertions and deletions.
Previously, Frahling and Sohler [STOC 2005]
designed a $(1+\epsilon)$-approximation algorithm 
for the low-dimensional regime, i.e., it uses space $\exp(d)$. 

To tackle this problem in the high-dimensional regime,
which is of growing interest, 
one must improve the dependence on the dimension $d$,
ideally to space complexity $\poly(\epsilon^{-1} d \log\Delta)$.
Lammersen, Sidiropoulos, and Sohler [WADS 2009] proved that Euclidean \MaxCut
admits dimension reduction with target dimension $d' = \poly(\epsilon^{-1})$.
Combining this with the aforementioned algorithm that uses space $\exp(d')$,
they obtain an algorithm whose overall space complexity is
indeed polynomial in $d$, but unfortunately exponential in $\epsilon^{-1}$.

We devise an alternative approach of \emph{data reduction},
based on importance sampling,
and achieve space bound $\poly(\epsilon^{-1} d \log\Delta)$,
which is exponentially better (in $\epsilon$) than the dimension-reduction approach.
To implement this scheme in the streaming model,
we employ a randomly-shifted quadtree to construct a tree embedding.  
While this is a well-known method, 
a key feature of our algorithm is that the embedding's distortion $O(d\log\Delta)$
affects only the space complexity,
and the approximation ratio remains $1+\epsilon$.
\end{abstract}

    \section{Introduction}
\label{sec:intro}

\MaxCut is a fundamental problem in multiple domains,
from constraint satisfaction (CSP) and linear equations to clustering.
It was studied extensively in many computational models and for types of inputs,
and many (nearly) tight bounds were obtained,
oftentimes leading the way to even more general problems.
For instance, in the offline setting,
\MaxCut admits a polynomial-time $0.878$-approximation for general graphs~\cite{DBLP:journals/jacm/GoemansW95},
and this approximation factor is tight under the Unique Games Conjecture~\cite{DBLP:journals/siamcomp/KhotKMO07}.
In contrast, if the input is a dense unweighted graph,
or a metric space (viewed as a weighted graph),
then a PTAS exists~\cite{DBLP:journals/rsa/VegaK00,VK01}.
In the graph-streaming setting,
$(1 + \epsilon)$-approximation can be obtained using $\tilde{O}(n)$ space~\cite{DBLP:conf/icalp/AhnG09},
and this space bound is tight~\cite{DBLP:conf/stoc/KapralovK19}.

However, the streaming complexity of \MaxCut is only partially resolved
in the geometric setting, i.e., for Euclidean points.
A known algorithm, due to Frahling and Sohler~\cite{DBLP:conf/stoc/FrahlingS05},
achieves $(1 + \epsilon)$-approximation but uses space $\exp(d)$,
which is prohibitive when the dimension is high.
Combining this algorithm with a dimension reduction result,
based on the Johnson-Lindenstrauss Lemma but specialized to \MaxCut
and has target dimension $\poly(\epsilon^{-1})$~\cite{DBLP:conf/wads/LammersenSS09,lammersen2011approximation},
one can achieve polynomial dependence on $d$,
but at the expense of introducing to the space complexity
an undesirable $\exp(\poly(\epsilon^{-1}))$-factor.
It was left open to obtain in the high-dimension regime
space complexity that is truly efficient, i.e., $\poly(\epsilon^{-1} d)$.

We answer this question by providing the first streaming algorithms
that achieve $(1+\epsilon)$-approximation for \MaxCut using space $\poly(d \epsilon^{-1})$.
We consider the setting of \emph{dynamic geometric streams},
introduced by Indyk~\cite{Indyk04},
where the input is a dataset $X \subseteq [\Delta]^d$
that is presented as a stream of point insertions and deletions.
The goal of the algorithm is to approximate (multiplicatively)
the so-called \MaxCut value, defined as
\[
  \MaxCut(X) := \max_{S\subseteq X} \sum_{x \in S, y \in X \setminus S}\|x -y\|_2
\]
(see \Cref{sec:prelim} for general metric spaces).
We say that $\MaxCut(X)$ is \emph{$\alpha$-approximated},
for $\alpha \geq 1$, by a value $\eta\ge 0$
if $\MaxCut(X)/\alpha \leq \eta \leq \MaxCut(X)$.\footnote{We will actually aim at $\eta \in (1\pm\epsilon) \cdot \MaxCut(X)$,
  for $0<\epsilon<1/2$,
  which can be scaled to achieve $(1+O(\epsilon))$-approximation.
}
We assume throughout that $X$ contains distinct points (and is not a multiset),
hence $n:=|X|\leq |\Delta|^d$.
In the \emph{high-dimension regime},
algorithms can use at most $\poly(d \log \Delta)$ bits of space,
which is polynomial in the number of bits required to represent a point in $[\Delta]^d$,
and also allows counting to $n \leq |\Delta|^d$. 
In the \emph{low-dimension regime},
algorithms may have bigger space complexity, e.g., exponential in $d$.

A central challenge in the area of geometric streaming algorithms
is to achieve good accuracy (approximation) in the high-dimension regime.
Indeed, algorithms for many basic problems
(like diameter, minimum spanning tree, facility location, and \MaxCut),
achieve good approximation, say for concreteness $O(1)$ or even $1+\epsilon$,
using space that is exponential in $d$
\cite{AHV04, DBLP:journals/algorithmica/Zarrabi-Zadeh11, DBLP:conf/stoc/FrahlingS05, DBLP:journals/ijcga/FrahlingIS08,DBLP:conf/esa/LammersenS08, DBLP:conf/soda/CzumajLMS13, DBLP:conf/icalp/CzumajJKV22}.
In contrast, algorithms that use space polynomial in $d$ are fewer
and they typically achieve far worse approximation ratio
\cite{Indyk04,DBLP:conf/stoc/ChenJLW22,CJKVW22,WY22},
and obtaining $O(1)$-approximation remains open.
In particular, Indyk~\cite{Indyk04} tackled the high-dimension regime
using a technique of randomized tree embedding, which is rather general and economical in space,
but unfortunately distorts distances by a factor of $O(d\log\Delta)$
that goes directly into the approximation ratio.
Attempts to improve the approximation ratio had only limited success so far;
for example, the algorithm of~\cite{AS15} (for diameter)
works only in insertion-only streams,
and the algorithms of~\cite{DBLP:conf/stoc/ChenJLW22,CJKVW22}
(for MST and for facility location)
fall short of the desired $O(1)$-approximation in one pass.

\subsection{Our Results}
We bypass the limitation of dimension reduction via a data reduction approach,
and design a streaming algorithm that $(1+\epsilon)$-approximates \MaxCut
using $\poly(\epsilon^{-1} d \log\Delta)$ space,
thereby closing the gap of high dimension (for \MaxCut).
Our approach works not only under Euclidean norm,
but also when distances are calculated using $\ell_p$ norm, for $p\ge1$. 

\paragraph{Data Reduction via Importance Sampling}
We present an algorithm that is based on the data-reduction approach,
namely, it uses the dataset $X$ to construct a small instance $X'$
that has a similar \MaxCut value,
then solve \MaxCut on it optimally and report this value $\MaxCut(X')$.
Following a common paradigm, $X'$ is actually a re-weighted subset of $X$,
that is picked by non-uniform sampling from $X$, known as importance sampling.

\begin{theorem}[Streaming $\MaxCut$ in $\ell_p$ Norm]
    \label{thm:streaming_intro}
    There is a randomized streaming algorithm that,
    given $0 < \epsilon < 1/2$, $p \geq 1$,
    integers $\Delta, d \geq 1$,
    and an input dataset $X \subseteq [\Delta]^d $ presented as a dynamic stream,
    uses space $\poly(\epsilon^{-1} d \log \Delta)$
    and reports an estimate $\eta>0$ that with probability at least $2/3$
    is a $(1 + \epsilon)$-approximation to $\MaxCut(X)$ in $\ell_p$.
\end{theorem}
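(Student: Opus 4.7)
The algorithm is organized in three conceptual layers: an offline importance-sampling scheme that replaces $X$ by a small weighted subset $X'$ preserving $\MaxCut$, a randomly shifted quadtree that supplies cheap over-estimates of the sampling scores, and a sketching pipeline that realizes the whole construction in a dynamic stream. The final output is simply $\MaxCut(X')$, computed by brute force on the (small) sample with respect to \emph{true} $\ell_p$ distances; in particular, the tree embedding is used only to decide \emph{whom to sample}, so its $O(d\log\Delta)$ distortion inflates $|X'|$ rather than the approximation ratio.

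\textbf{Offline importance sampling.} For each $x\in X$, define its contribution $c_x:=\sum_{y\in X}\|x-y\|_p$ and the sampling probability $p_x:=\min\{1,\, t\cdot c_x/\sum_y c_y\}$ with $t=\poly(\epsilon^{-1}d\log\Delta)$. Include $x$ in $X'$ independently with probability $p_x$ and weight $1/p_x$. Two ingredients drive the analysis: (i) for any metric, a random cut achieves value $\tfrac12\sum_{\{x,y\}}\|x-y\|_p$, so $\MaxCut(X)\ge \tfrac12 \sum c_x /2$, and hence each sensitivity $c_x/\MaxCut(X)$ sums to $O(1)$; (ii) a uniform concentration statement: for every $S\subseteq X$, the weighted cut of $S\cap X'$ in $X'$ is $(1\pm\epsilon)\cut_X(S)$.

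\textbf{Uniform convergence via the quadtree.} A naive union bound over $2^{|X|}$ cuts is useless, so I would exploit the geometric structure: decompose any cut value along the levels of a randomly shifted quadtree on $[\Delta]^d$ into a telescoping sum of inter-cell contributions, noting that at each of the $O(\log\Delta)$ levels the quadtree cells partition $X$ into only $\poly(d\log\Delta)$-complexity pieces. Applying Bernstein tail bounds to each level and taking a union bound over this structured class gives the desired $(1\pm\epsilon)$ guarantee uniformly, with $t=\poly(\epsilon^{-1}d\log\Delta)$. This both drives the sample size and motivates the choice of tree embedding used for streaming.

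\textbf{Streaming implementation.} In a dynamic stream one cannot compute $c_x$ exactly, so I would replace $c_x$ by an over-estimate $\hat c_x\ge c_x$ with $\E[\hat c_x]=O(d\log\Delta)\cdot c_x$, computed from the quadtree via $\ell_0$-samplers over cells at each level, \ala Indyk; this is standard and uses $\poly(d\log\Delta)$ space per point. The importance sample itself is realized by running, in parallel, a bank of $\ell_p$-type samplers indexed by a uniform threshold per point, returning $x$ iff its random variate falls below $\hat p_x\propto \hat c_x$; once the stream ends, the surviving points, together with their true coordinates (recoverable from the quadtree sketches), form $X'$, and the algorithm returns $\MaxCut(X')$. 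Because the over-estimate only inflates the expected sample size by $O(d\log\Delta)$, the total space is $\poly(\epsilon^{-1}d\log\Delta)$.

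\textbf{Main obstacle.} The technically hardest step is the uniform concentration inequality for the cut function over all $2^{|X|}$ subsets $S$: $\MaxCut$ is neither decomposable like clustering cost, nor of bounded VC/pseudo-dimension in the generic metric setting, so the argument must genuinely exploit the quadtree hierarchy. Coupling the over-estimates $\hat c_x$ used for sampling consistently with the levels used for concentration, and controlling the error accumulated across $O(\log\Delta)$ levels so that the final loss is additive-in-$\epsilon$ rather than multiplicative, is where the bulk of the work will lie.
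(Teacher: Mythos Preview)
Your high-level plan---sample proportional to $c_x=\sum_y\|x-y\|_p$, use the quadtree only to over-estimate $c_x$ so its distortion hits the sample size rather than the accuracy, then brute-force $\MaxCut$ on the sample with true $\ell_p$ distances---matches the paper. But both places you flag as hard are handled very differently there, and your proposed routes for them have genuine gaps.

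\textbf{Uniform concentration.} Your quadtree decomposition does not deliver the claimed ``$\poly(d\log\Delta)$-complexity pieces'' per level: there can be exponentially-in-$d$ many occupied cells at a level, and within a level a cut $S$ still splits each cell arbitrarily, so the class you must union-bound over is not small. The paper sidesteps this issue entirely by reducing to the dense-graph setting \`a~la de~la~Vega--Karpinski: give edge $(x,y)$ weight $\dist(x,y)/(4\lambda^2 Q\,p_x p_y)$, which lies in $[0,1]$ by the triangle-inequality bound $\dist(x,y)/(q(x)q(y))\le 4/Q$, and observe that i.i.d.\ sampling from $(p_x)_{x\in X}$ is exactly uniform vertex sampling in this auxiliary graph. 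Then the known $O(\epsilon^{-4})$ uniform-sample bound for $\MaxCut$ on $[0,1]$-weighted graphs (Alon--de~la~Vega--Kannan--Karpinski, sharpened by Rudelson--Vershynin) is invoked as a black box. No new concentration argument is proved, and the quadtree plays no role whatsoever in this part.

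\textbf{Streaming sampler.} ``Return $x$ iff its random variate falls below $\hat p_x$'' is not a streaming algorithm: $\hat p_x\propto\hat c_x$ depends on all of $X$ and is known only at the end of the pass, so you can neither threshold during the stream nor store all candidates. This is where the paper's main technical work lies. It shows that the tree-based score $q_T(x)$ is determined up to constants by the level $\ell(x)$ at which $x$ leaves the ``heavy path'' $h_1,\dots,h_{\tilde k}$ of maximum-weight nodes (down to a critical level $\tilde k$), and then performs two-stage sampling: first draw a level $i\le\tilde k$ from a carefully designed distribution, then draw a uniform point from $X\setminus X(h_{i+1})$. Even this last sub-step---uniformly sampling from the complement of the single heaviest cell at a given level, in a dynamic stream, when that complement may be tiny or empty---is nontrivial and is done by hashing cells into two buckets and $\ell_0$-sampling from the lighter bucket. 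None of this structure is present in your proposal.
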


This data-reduction approach was previously used for several clustering problems.
For $k$-median and related problems,
such an instance $X'$ is often called a coreset,
and there are many constructions,
see e.g.~\cite{DBLP:conf/stoc/Har-PeledM04,DBLP:conf/stoc/FrahlingS05,DBLP:conf/stoc/FeldmanL11,BFLSY17,DBLP:journals/corr/abs-1802-00459,DBLP:journals/siamcomp/FeldmanSS20}.
Earlier work~\cite{Schulman00} has designed importance-sampling based 
algorithms for a problem closely related to \MaxCut,
but did not provide a guarantee that $X$ and $X'$ have a similar \MaxCut value
(see \Cref{sec:related} for a more detailed discussion).
Recently, importance sampling was used to design streaming algorithms
for facility location in high dimension~\cite{CJKVW22},
although their sample $X'$ is not an instance of facility location.
Overall, this prior work is not useful for us, 
and we have to devise our own sampling distribution,
prove that it preserves the \MaxCut value,
and design a streaming algorithm that samples from this distribution.

The approximation ratio $1 + \epsilon$ in \Cref{thm:streaming_intro} is essentially the best one can hope for using small space,
because finding the \MaxCut value exactly, even in one dimension,
requires $\Omega(\Delta)$ space, as shown in \Cref{claim:lb_exact}.
Compared with the dimension-reduction approach (based on~\cite{DBLP:conf/stoc/FrahlingS05}), 
our \Cref{thm:streaming_intro} has the advantage that it works
for all $\ell_p$ norms ($p \geq 1$) and not only $\ell_2$.
The result of~\cite{DBLP:conf/stoc/FrahlingS05} is stronger in another aspect,
of providing a ``cut oracle'',
i.e., an implicit representation of the approximately optimal cut
that can answer the side of the cut that each data point $x\in X$
(given as a query) belongs to.
This feature extends also to high dimension,
as it is easy to combine with the dimension reduction. 
For completeness,
we give a (somewhat simplified) proof of the dimension reduction in \Cref{sec:jl},
followed by a formal statement of this ``cut oracle'' in \Cref{cor:cut_oracle}.
It remains open to design a streaming algorithm that computes
such an implicit representation using space $\poly(\epsilon^{-1} d \log \Delta)$.

\subsection{Technical Overview}

In order to estimate \MaxCut using importance sampling,
we must first identify a sampling distribution for which
$\MaxCut(X')$ indeed approximates $\MaxCut(X)$,
and then we have to design a streaming algorithm
that samples from this distribution.

\paragraph{Sampling Probability}
One indication that geometric \MaxCut admits data reduction by sampling 
comes from the setting of dense unweighted graphs,
where it is known that \MaxCut can be $(1+\epsilon)$-approximated
using a \emph{uniform sample} of $O(\epsilon^{-4})$ vertices,
namely, by taking the \MaxCut value in the induced subgraph and scaling
appropriately~\cite{DBLP:journals/jcss/AlonVKK03,DBLP:journals/jacm/RudelsonV07}
(improving over \cite{GGR96}). 
However, sampling points uniformly clearly cannot work in the metric case --
if a point set $X$ has many co-located points
and a few distant points that are also far from each other,
then uniform sampling from $X$ is unlikely to pick the distant points,
which have a crucial contribution to the \MaxCut value. 
It is therefore natural to employ importance sampling,
i.e., sample each point with probability proportional to its contribution.
The contribution of a point $x$ to the \MaxCut value is difficult to gauge,
but we can use instead a simple proxy --
its total distance to all other points $q(x) := \sum_{y \in X} \dist(x, y)$,
which is just its contribution to twice the total edge weight
$\sum_{x,y\in X} \dist(x, y)$.
For any fixed cut in $X$,
this sampling works well and the estimate will (likely) have additive error
$\epsilon \sum_{x,y\in X} \dist(x, y) = \Theta(\epsilon) \cdot\MaxCut(X)$.
While the analysis is straightforward for a fixed cut, say a maximum one,
proving that the sampling preserves the \MaxCut value
is much more challenging, as one has to argue about all possible cuts. 

We show in \Cref{thm:sampling} that $O(\epsilon^{-4})$ independent samples
generated with probability proportional to $q(x)$ preserve the \MaxCut value.
This holds even if the sampling probabilities are dampened by a factor $\lambda \geq 1$,
at the cost of increasing the number of samples by a $\poly(\lambda)$ factor.
To be more precise,
it suffices to sample from any probability distribution $\{p_x:\ x\in X\}$,
where $p(x) \ge \frac{1}{\lambda} \frac{q(x)}{\sum_{y\in X} q(y)}$ for all $x\in X$.
A small technicality is that we require the sampling procedure
to report a random sample $x^*$ together with its corresponding $p(x^*)$,
in order to re-weight the sample $x^*$ by factor $1/p(x^*)$,
but this extra information can often be obtained using the same methods.

This sampling distribution,
i.e., probabilities proportional to $\{q(x):\ x\in X\}$,
can be traced to two prior works.
Schulman~\cite{Schulman00} used essentially the same probabilities,
but his analysis works only for one fixed cut,
and extends to a multiple cuts by a union bound.\footnote{We gloss over slight technical differences,
  e.g., he deals with squared Euclidean distances,
and his sampling and re-weighting processes are slightly different. 
}
The exact same probabilities were also used by~\cite{VK01}
as weights to convert a metric instance to a dense unweighted graph, 
and thereby obtain a PTAS (without sampling or any data reduction).

In fact, our proof combines several observations from~\cite{VK01} 
about a uniform sample of $O(\epsilon^{-4})$ vertices in unweighted graphs~\cite{DBLP:journals/jcss/AlonVKK03,DBLP:journals/jacm/RudelsonV07}.
In a nutshell, we relate
sampling from $X$ proportionally to $\{q(x):\ x\in X\}$ 
with sampling uniformly from a certain dense unweighted graph,
whose cut values corresponds to those in $X$,
and we thereby derive a bound on the \MaxCut value of the sample $X'$.

\paragraph{Streaming Implementation}
Designing a procedure to sample proportionally to $\{q(x):\ x\in X\}$, 
when $X$ is presented as a stream,
is much more challenging and is our main technical contribution
(\Cref{lem:importance_sampling_algorithm}).
The main difficulty is that standard tools for sampling from a stream,
such as $\ell_p$-samplers \cite{MW10,JST11,JW21},
are based on the frequency vector 
and oblivious to the geometric structure of $X$.
Indeed, the literature lacks geometric samplers,
which can be very useful when designing geometric streaming algorithms.
Our goal of sampling proportionally to $q(x)$,
which is the total distance to all other points in $X$,
seems like a fundamental geometric primitive, 
and therefore our sampler (\Cref{lem:importance_sampling_algorithm})
is a significant addition to the geometric-streaming toolbox,
and may be of independent interest.

\paragraph{High-Level Picture}
At a high level,
our sampling procedure is based on a randomly-shifted quadtree $T$
that is defined on the entire input domain $[\Delta]^d$ and captures its geometry.
This randomly-shifted quadtree $T$ is data oblivious,
and thus can be picked (constructed implicitly)
even before the stream starts (as an initialization step), 
using space $O(\poly(d \log \Delta))$.
This technique was introduced by Indyk~\cite{Indyk04},
who noted that the quadtree essentially defines a tree embedding 
with expected distortion of distances $O(d\log\Delta)$. 
Unfortunately, the distortion is fundamental to this approach,
and directly affects the accuracy (namely, goes into the approximation ratio)
of streaming algorithms that use this technique~\cite{Indyk04,DBLP:conf/stoc/ChenJLW22}.\footnote{A randomly-shifted quadtree was used also
  in Arora's approximation algorithm for TSP~\cite{Arora98},
  but as the basis for dynamic programming rather than as a tree embedding,
  and similarly in streaming applications of this technique~\cite{DBLP:conf/icalp/CzumajJKV22}.
}
While our approach still suffers this distortion,
we can avoid its effect on the accuracy;
instead, it affects the importance-sampling distribution,
namely, the dampening factor $\lambda$ grows by factor $O(d\log\Delta)$,
which can be compensated by drawing more samples.
This increases the space complexity moderately, which we can afford,
and overall leads to $(1 + \epsilon)$-approximation
using space $\poly(\epsilon^{-1} d \log \Delta)$.

\paragraph{Comparison to Other Sampling Approaches}
A different importance sampling method 
was recently designed in~\cite{CJKVW22} for facility location in high dimension.
Their sampling procedure relies on a geometric hashing (space partitioning),
and completely avoids a quadtree.\footnote{A key parameter in that hashing, called consistency, turns out to be $\poly(d)$,
and unfortunately affects also the final approximation ratio for facility location,
and not only the importance-sampling parameter $\lambda$ (and thus the space).
}
Our sampling technique may be more easily applicable to other problems,
as it uses a more standard and general tool of a randomly-shifted quadtree.
Another importance-sampling method
was recently designed in~\cite{MahabadiRWZ20} in the context of matrix streams.
Geometrically, their importance-sampling can be viewed as
picking a point (row in the matrix) proportionally to its length, 
but after the points are projected, at query time, onto a subspace. This sampling procedure is very effective for linear-algebraic problems,
like column-subset selection and subspace approximation,
which can be viewed also as geometric problems.

Previously, quadtrees were employed in geometric sampling results,
but mostly a fixed quadtree and not a randomly-shifted one, 
and to perform \emph{uniform sampling} over its non-empty squares at each level,
as opposed to our importance sampling~\cite{DBLP:conf/stoc/FrahlingS05,DBLP:journals/ijcga/FrahlingIS08,DBLP:conf/icalp/CzumajJKV22}.
Furthermore, in~\cite{DBLP:journals/ijcga/FrahlingIS08}, 
uniform sampling was augmented to report also all the points nearby the sampled points, and this was used to estimate the number of connected components in a metric threshold graph.

\paragraph{Sampling on Quadtree: Bypassing the Distortion}
Finally, we discuss how to perform the importance sampling, 
based on a randomly-shifted quadtree $T$,
but without the $O(d\log\Delta)$ distortion going into the approximation ratio. 
As explained above, our importance sampling (\Cref{thm:sampling}) requires that 
every point $x\in X$ is sampled with some probability 
$p(x) \geq \frac{1}{\lambda} \frac{q(x)}{Q}$ for $Q := \sum_{y\in X} q(y)$,
and the dampening factor $\lambda$ can be up to $O(\poly(d \log \Delta))$. 
We will exploit the fact that there is no upper bound on the probability $p(x)$.
As usual, a randomly-shifted quadtree $T$ is obtained by
recursively partitioning of the grid $[2\Delta]^d$,
each time into $2^d$ equal-size grids, which we call squares,
and building a tree whose nodes are all these squares,
and every square is connected to its parent by an edge whose weight
is equal to that square's Euclidean diameter.
Furthermore, this entire partitioning is shifted
by a uniformly random $v_{\textrm{shift}}\in [\Delta]^d$.
(See \Cref{sec:proof_importance_sampling} for details.) 
The key is that every grid point $x\in [\Delta]^d$ is represented by a tree leaf,
and thus $T$ defines distances on $[\Delta]^d$, also called a tree embedding. 
Define $q_T$ and $Q_T$ analogously to $q$ and $Q$,
but using the \emph{tree distance},
that is, $q_T(x)$ is the total tree distance from $x$ to all other points in $X$,
and $Q_T := \sum_{y\in X} q_T(y)$. 

The tree-embedding guarantees, for all $x\in[\Delta]^d$, 
that $q_T(x)$ overestimates $q(x)$ (with probability $1$), 
and that $\E_T[q_T(x)] \leq O(d \log \Delta) \cdot q(x)$.
It thus suffices to have one event, $Q_T \leq O(d \log \Delta) \cdot Q$,
which happens with constant probability by Markov's inequality,
and then we would have
$\frac{q_T(x)}{Q_T} \geq \frac{1}{O(d \log \Delta)} \cdot \frac{q(x)}{Q}$
simultaneously for all $x\in X$.
As mentioned earlier, the algorithm picks (constructs implicitly) $T$
even before the stream starts,
and the analysis assumes the event mentioned above happens.
In fact, the rest of the sampling procedure works correctly for arbitrary $T$,
i.e., regardless of that event.
We stress, however, that our final algorithm actually needs to generate
multiple samples that are picked independently from the same distribution,
and this is easily achieved by parallel executions 
that share the same random tree $T$ but use independent coins in all later steps. 
We thus view this $T$ as a preprocessing step that is run only once,
particularly when we refer to samples as independent.

\paragraph{A Two-level Sampling by Using Tree Structure}
The remaining challenge is to sample (in a streaming fashion)
with probability proportional to $q_T$ for a fixed quadtree $T$.
To this end, we make heavy use of the structure of the quadtree.
In particular, the quadtree $T$ has $O(d \log \Delta)$ levels,
and every data point $x\in X$ forms a distinct leaf in $T$.
Clearly, each internal node in $T$ represents a subset of $X$
(all its descendants in $T$, that is, the points of $X$ inside its square).
At each level $i$ (the root node has level $1$),
we identify a level-$i$ \emph{heavy node} $h_i$ that contains
the maximum number of points from $X$ (breaking ties arbitrarily).
We further identify a \emph{critical level} $k$,
such that $h_1, \ldots, h_k$ (viewed as subsets of $X$),
all contain more than $0.5 |X|$ points, but $h_{k+1}$  does not.
This clearly ensures that $h_1, \ldots, h_k$ forms a path.
Let $X(h_i) \subseteq X$ denote the subset of $X$ represented by node $h_i$.
The tree structure and the path property of $(h_1, \ldots, h_k)$ guarantee that,
for each $i < k$,
all points $x\in X(h_{i}) \setminus X(h_{i + 1})$ have roughly the same $q_T(x)$ values.
For the boundary case $i = k$, a similar claim holds for $x \in X(h_{k})$.
Hence, a natural algorithm is to employ two-level sampling:
first draw a level $i^* \leq k$,
then draw a uniform sample from $X(h_{i^*}) \setminus X(h_{i^* + 1})$.
The distribution of sampling $i^*$ also requires a careful design,
but we omit this from the current overview,
and focus instead on how to draw a uniform sample from $X(h_i) \setminus X(h_{i + 1})$.

Unfortunately, sampling from $X(h_i) \setminus X(h_{i+1})$ still requires $\Omega(\Delta)$ space in the streaming model, even for $d = 1$.
(We prove this reduction from INDEX in \Cref{claim:lb_sampling_light}.)
In fact, it is even hard to determine whether $X(h_i) \setminus X(h_{i + 1}) = \emptyset$;
the difficulty is that $h_i$ is not known in advance, otherwise it would be easy. 
We therefore relax the two-level sampling,
and replace sampling from $X(h_i) \setminus X(h_{i + 1})$,
with its superset $X \setminus X(h_{i + 1})$.
This certainly biases the sampling probability,
in fact some probabilities might change by an unbounded factor (\Cref{rem:ChangeOrderOfSum}),
nevertheless we prove that the increase in the dampening parameter $\lambda$
is bounded by an $O(\log\Delta)$ factor (\Cref{lem:prob}),
which we can still afford.
This part crucially uses the tree and the path property of $(h_1, \ldots, h_k)$.

\paragraph{Sampling from Light Parts}
The final remaining step is to sample from $X \setminus X(h_{i})$
for a given $i\leq k$ (\Cref{lemma:streaming_light}).
We can assume here $X(h_i)$ contains more than $0.5 |X|$ points,
and thus $X \setminus X(h_i)$ is indeed the ``light'' part,
containing few (and possibly no) points.
To let the light points ``stand out'' in a sampling,
we hash the \emph{tree nodes} (instead of the points) randomly into two buckets.
Since $|X(h(i))| > 0.5 |X|$,
the heavy node $h_i$ always lies in the bucket that contains more points,
and we therefore sample only from the light bucket.
(To implement this in streaming,
we actually generate two samples, one from each bucket,
and in parallel estimate the two buckets' sizes
to know which of the two samples to use.)
Typically, the light bucket contains at least half of the points
from $X \setminus X(h_i)$, which is enough. 
Overall, this yields a sampling procedure that uses space $\poly(\epsilon^{-1}d \log \Delta)$.

\subsection{Related Work}
\label{sec:related}

Geometric streaming in the low-dimension regime was studied much more extensively
than in the high-dimensional case that we investigate here.
In~\cite{DBLP:conf/stoc/FrahlingS05}, apart from the $O(\epsilon^{-d}\poly\log \Delta)$-space $(1 + \epsilon)$-approximation for \MaxCut that we already mentioned,
similar results were obtained also for $k$-median, maximum spanning tree, maximum matching and similar maximization problems.
For minimum spanning tree,
a $(1 + \epsilon)$-approximation $O((\epsilon^{-1}\log\Delta)^d)$-space 
algorithm was devised in \cite{DBLP:journals/ijcga/FrahlingIS08}, 
alongside several useful techniques for geometric sampling in low dimension.
In~\cite{DBLP:conf/focs/AndoniBIW09}, an $O(\epsilon^{-1})$-approximation
$\tilde{O}(\Delta^{\epsilon})$-space streaming algorithm was obtained
for computing earth-mover distance in dimension $d = 2$.
For facility location in dimension $d=2$,
a $(1 + \epsilon)$-approximation $\poly(\epsilon^{-1}\log \Delta)$-space 
algorithm was designed in~\cite{DBLP:conf/soda/CzumajLMS13}.
Recently, Steiner forest (a generalization of Steiner tree,
which asks to find a minimum-weight graph that connects $k$ groups of points),
was studied in~\cite{CJKVW22} for dimension $d=2$,
and they obtain $O(1)$-approximation using space $(\poly(k \log \Delta))$.
 
Our sampling distribution may seem reminiscent of
an importance sampling procedure devised by Schulman~\cite{Schulman00},
however his results and techniques are not useful in our context.
First, the problem formulation differs,
as it asks to approximate the complement objective of sum of distances inside each cluster (which is only stronger than our MAX-CUT objective),
but the objective function sums squared Euclidean (rather than Euclidean) distances. 
Second, his algorithm is for the offline setting
and is not directly applicable in streaming.
Third, his analysis does not provide a guarantee on $\MaxCut(X')$,
but rather only on certain cuts of $X'$ (not all of them),
and his approximation guarantee includes a non-standard twist.

     \section{Preliminaries}
\label{sec:prelim}
Consider a metric space $(V, \dist)$.
The Euclidean case is $V = \mathbb{R}^d$ and $\dist = \ell_2$,
and the $\ell_p$ case is $V = \mathbb{R}^d$ and $\dist = \ell_p$.
For $X \subseteq V$, the cut function $\cut_X : 2^X \to \mathbb{R}$
is defined as
$\cut_X(S) := \sum_{x\in S, y \in X \setminus S} \dist(x, y)$.
The \MaxCut value of a dataset $X \subseteq V$ is defined as
\[
    \MaxCut(X) := \max_{S \subseteq X}{\cut_X(S)}.
\]
We shall use the following standard tools as building blocks in our algorithms.
Recall that a turnstile (or dynamic) stream can contain insertions and deletions of items from some domain $[N]$,
and it naturally defines a frequency vector $x\in\mathbb{R}^N$,
where every possible item has a coordinate that counts its net frequency, i.e., insertions minus deletions.
In general, this model allows frequencies to be negative.
However, in our setting where $X\subset [\Delta]^d$ is presented as a dynamic geometric stream,
then frequency vector has $\Delta^d$ coordinates all in the range $\{0,1\}$,
and is just the incidence vector of $X$.

\begin{lemma}[$\ell_0$-Norm Estimator~\cite{DBLP:conf/pods/KaneNW10}]
    \label{lem:l0estimator}
    There exists a streaming algorithm that,
    given $0 < \epsilon, \delta< 1$,
    integers $N, M \geq 1$,
    and a frequency vector $x \in [-M, M]^N$ presented as a turnstile stream,
    where we denote its support by $X := \{ i \in [N] :\ x_i \neq 0 \}$,
    uses space $\poly(\epsilon^{-1}\log(\delta^{-1} MN))$
    to return $r^* \geq 0$,
    such that $\Pr[ r^* \in (1 \pm \epsilon) |X| ] \geq 1-\delta$.
\end{lemma}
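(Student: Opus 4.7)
The plan is to follow the standard subsample-and-recover template for $F_0$ estimation in turnstile streams, which is the approach underlying the cited result of Kane--Nelson--Woodruff.

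First, I would reduce the problem to the case where we already know a constant-factor upper bound $R$ on $|X|$; then the full estimator runs $O(\log N)$ parallel copies, one per geometric guess $R \in \{1, 2, 4, \ldots, N\}$, and combines them. So fix one guess $R$ and assume, up to constants, that $|X| \le R$.

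Next, pick a threshold $T = \Theta(\epsilon^{-2})$ and a pairwise-independent (actually $\Theta(\log \epsilon^{-1})$-wise independent) hash function $h : [N] \to [2R/T]$. Fix one bucket $b$ and consider the substream that retains only coordinates $i$ with $h(i) = b$; each $i \in X$ survives independently with probability $T/(2R)$, so the surviving support has expected size $|X| \cdot T/(2R) \in [\,T/2,\, T\,]$ (up to constants). I would maintain a linear sparse-recovery sketch of this surviving substream, of size $O(T \log(MN))$ bits, that exactly recovers the support and values whenever they are $T$-sparse. This is a standard primitive: e.g., take $O(T)$ random linear combinations over a sufficiently large prime field, or invoke a CountSketch with $O(T)$ buckets and a $T$-sparse recovery procedure; the value range $[-M,M]$ only affects the per-coordinate word size, contributing the $\log M$ factor.

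Given the exact sparse recovery, the estimator outputs $r^* := (2R/T) \cdot |\,\text{recovered support}\,|$. Concentration of the survived count around its mean follows from Chebyshev's inequality (using pairwise independence of $h$), which gives a $(1 \pm \epsilon)$ multiplicative guarantee with constant probability; boosting from constant to $1 - \delta$ is achieved by taking the median of $O(\log \delta^{-1})$ independent copies. Finally, to pick the right guess $R$ without knowing $|X|$ in advance, I would examine the parallel copies in order of increasing $R$ and use the smallest one whose recovered sample size lies in the predicted range $[\Theta(T), T]$; outside this range the sketch either returns \textsc{fail} (too many survivors) or is statistically unreliable (too few).

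The main obstacle is the concentration step: naive analysis asks for fully independent hashing, which blows the space budget. The fix, and the technically delicate part of the argument, is to show that $\Theta(\log \epsilon^{-1})$-wise independence already suffices for a $(1 \pm \epsilon)$-bound on the surviving count, via a moment computation on the number of survivors; this is exactly where the KNW analysis is needed, and everything else (sparse-recovery sketch, geometric search for $R$, median boosting) is standard. Substituting the parameters $T = \Theta(\epsilon^{-2})$, $O(\log N)$ guesses, and $O(\log \delta^{-1})$ repetitions, the total space is $\poly(\epsilon^{-1} \log(\delta^{-1} MN))$ as claimed.
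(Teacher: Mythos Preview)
The paper does not actually prove this lemma: it is stated in the Preliminaries section as a known tool and simply cited to Kane--Nelson--Woodruff, with no argument given. So there is no ``paper's own proof'' to compare your proposal against.

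That said, your sketch is a reasonable outline of the standard subsample-and-recover approach underlying the cited result, and nothing in it is obviously wrong. A few minor remarks: the geometric search over guesses $R$ is not strictly necessary in all presentations (one can instead run $O(\log N)$ sampling levels simultaneously and pick the level whose sparse-recovery output falls in the right range, which is essentially what you describe at the end anyway); and the independence requirement you flag as ``the main obstacle'' is indeed the technical heart of KNW, though for the space bound stated here (only $\poly(\epsilon^{-1}\log(\delta^{-1}MN))$, not the optimal bound) even somewhat looser arguments would suffice. For the purposes of this paper, though, the lemma is used purely as a black box, and your level of detail already exceeds what the paper provides.
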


\begin{lemma}[$\ell_0$-Sampler~\cite{JST11}]
    \label{lem:l0sampler}
    There exists a streaming algorithm that,
    given $0 < \delta< 1$,
    integers $N, M \geq 1$,
    and a frequency vector $x \in [-M, M]^N$ presented as a turnstile stream,
    where we assume that its support $X := \{ i \in [N] : x_i \neq 0 \}$ is non-empty,
    uses space $\poly\log(\delta^{-1} MN)$,
    to return a sample $i^* \in X \cup \{\perp\}$,
    such that with probability at least $ 1 - \delta$,
    \[
        \forall i \in X, \qquad \Pr[i^* = i] = \frac{1}{|X|}.
    \]
\end{lemma}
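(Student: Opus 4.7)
The plan is to build the $\ell_0$-sampler by combining nested subsampling with $k$-sparse recovery, following the classical blueprint of JST. As preprocessing, draw a hash $h : [N] \to \{0, 1, \ldots, L\}$ with $L = \lceil \log_2 N \rceil + 1$ and $\Pr[h(i) \geq \ell] = 2^{-\ell}$, together with a priority hash $g : [N] \to [\poly(N/\delta)]$; both are drawn from $\Theta(\log \delta^{-1})$-wise independent families so they fit in $\poly\log(\delta^{-1} N)$ bits. Define nested subsamples $S_\ell := \{i \in [N] : h(i) \geq \ell\}$, so that the support sizes $|X \cap S_\ell|$ decrease geometrically in expectation.

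For each level $\ell \in \{0, \ldots, L\}$, maintain in parallel a $k$-sparse recovery sketch of the restricted frequency vector $x|_{S_\ell}$ with $k = \Theta(\log(L \delta^{-1}))$. Because the sketch is a linear function of $x|_{S_\ell}$, it handles turnstile updates by applying each $\pm 1$ change at coordinate $i$ only to those levels with $\ell \leq h(i)$. On query, each sketch either exactly recovers the support and non-zero entries of $x|_{S_\ell}$ (when $|X \cap S_\ell| \leq k$) or reports ``too dense'', with per-level failure probability at most $\delta/(4L)$; a standard instantiation uses $O(k)$ hashing buckets equipped with collision-detecting fingerprints over a $\poly(MN)$-size alphabet. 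Each sketch then uses $\poly(k \log(MN))$ bits, so all $L$ levels together fit in $\poly\log(\delta^{-1} MN)$ space.

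To answer a query, find the deepest level $\ell^*$ whose sketch succeeds and returns a non-empty support, and output the element of $X \cap S_{\ell^*}$ with smallest priority $g(i)$ (return $\perp$ if no such $\ell^*$ exists). A Chernoff argument on the subsample sizes shows that with probability at least $1 - \delta/2$ there is some level satisfying $1 \leq |X \cap S_\ell| \leq k$, so $\ell^*$ is well-defined and the recovered set equals $X \cap S_{\ell^*}$ exactly. Uniformity then follows from the (approximate) min-wise property of $g$: within $X$, each $i$ is equally likely to attain the global minimum priority, and the priority-minimizer necessarily lies in $X \cap S_{\ell^*}$ (the deepest non-empty level), so each $i \in X$ is output with probability $1/|X|$.

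The main obstacle is juggling three failure modes --- the subsampling failing to produce a sweet level, a sparse recovery mis-decoding at some level, and the priority hash failing min-wise behavior on the relevant support --- while keeping both hash families limited-independence so the whole algorithm fits in $\poly\log(\delta^{-1} MN)$ space. This is handled by taking $k = \Theta(\log(L/\delta))$, union-bounding the failure events over the $O(\log N)$ levels, and invoking known min-wise analyses for $O(\log \delta^{-1})$-wise hashing; the remaining pieces (linear sketching, limited-independence hash storage, streaming updates) are standard and contribute only polylogarithmic factors.
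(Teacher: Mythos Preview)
The paper does not prove this lemma at all: it is stated in the Preliminaries as a black-box tool, with a citation to~\cite{JST11} and no accompanying argument. So there is no ``paper's proof'' to compare against; you are supplying a proof where the authors chose to cite one.

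Your sketch follows the right JST blueprint (geometric subsampling plus sparse recovery), but the uniformity step has a genuine gap. You draw the level hash $h$ and the priority hash $g$ \emph{independently}, recover $X \cap S_{\ell^*}$ at the deepest non-empty level, and then output the $g$-minimizer within that set. You justify uniformity by asserting that ``the priority-minimizer necessarily lies in $X \cap S_{\ell^*}$.'' That assertion is false: since $g$ is independent of $h$, the global $g$-minimizer over $X$ has no reason to survive to the deepest $h$-level. Nothing forces $\arg\min_{i\in X} g(i)$ to have a large $h$-value.

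The standard fix (and what JST actually do) is to use a \emph{single} hash for both roles: hash each $i$ to a random value in a large range, let $S_\ell$ be the items whose hash lands in the top $2^{-\ell}$ fraction, and output the item with the extreme hash value. Then the global minimizer is automatically in every non-empty level, in particular the deepest one, and min-wise independence of that one hash gives uniformity directly. Alternatively, if you insist on separate $h$ and $g$, you can recover uniformity by a symmetry argument (the distribution of $X\cap S_{\ell^*}$ is permutation-invariant over $X$, so a uniform draw from it is uniform over $X$), but that is not the argument you wrote, and it needs care under limited independence. Either way, the sentence as stated does not go through.
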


     \section{Approximating \MaxCut by Importance Sampling}
\label{sec:sampling}
In this section, we consider a general metric space $(V, \dist)$ (which includes Euclidean spaces by setting $V = \mathbb{R}^d$ and $\dist = \ell_2$),
and show that a small importance-sample $S$ on a dataset $X \subseteq V$
may be used to estimate $\MaxCut(X)$ by simply computing $\MaxCut(S)$.

\paragraph{Point-weighted Set~\cite{VK01}}
Since we apply importance sampling (as opposed to using a uniform probability for every point),
the sampled points need to be re-weighted so that the estimation is unbiased.
Hence, we consider the notion of \emph{point-weighted} sets.
This notion was first considered in~\cite{VK01} to reduce the metric \MaxCut
to \MaxCut in (dense) weighted graphs.
Specifically, a point-weighted set $S \subseteq V$ is a subset of $V$ that is associated with a point-weight function $w_S : S \to \mathbb{R}_+$.
For a point-weighted set $S$, the distance $\dist_S(x, y)$
between $x, y \in S$ is also re-weighted such that $\dist_S(x, y) := \frac{\dist(x, y)}{w_S(x) w_S(y)}$.
Under this weighting, when an edge $\{x, y\}$ appears in a cut,
its contribution is still accounted as $w_S(x) \cdot w_S(y) \cdot \dist_S(x, y) = \dist(x, y)$.

We prove (in \Cref{thm:sampling}) that for every dataset $X \subseteq V$,
if one construct a point-weighted subset $S \subseteq X$
by drawing i.i.d. samples from a distribution on $X$,
where each $x \in X$ is sampled proportional to
    $q(x) = \sum_{y \in X}{\dist(x, y)}$
which is the sum of distances to other points in $X$,
up to an error factor of $\lambda$,
then $\MaxCut(S)$ is $(1 + \epsilon)$-approximation to $\MaxCut(X)$
with high probability.

\begin{theorem}\label{thm:sampling}
      Given $\varepsilon,\delta > 0,\lambda \ge 1$, metric space $(V, \dist)$
      and dataset $X \subseteq V$,
      let $\mathcal{D}$ be a distribution $(p_x : x \in X)$ on $X$ such that
      $\forall x \in X, p_x \geq \frac{1}{\lambda} \cdot \frac{q(x)}{Q}$,
      where $q(x) = \sum_{y \in X} \dist(x, y)$ and $Q = \sum_{x \in X}q(x)$.
      Let $S$ be a point-weighted set that is obtained by an i.i.d. sample of
      $m \geq 2$  points from $\mathcal{D}$,
      weighted by $w_S(x) := \hat{p}_x$ such that
      $p_x \leq \hat{p}_x \leq (1 + \epsilon) \cdot p_x$.
      If $m \geq O(\epsilon^{-4}\lambda^{-8})$,
      then with probability at least $0.9$,
      the value $\frac{\MaxCut(S)}{m^2} $ is a $(1 + \epsilon)$-approximation to $\MaxCut(X)$.
\end{theorem}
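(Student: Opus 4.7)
The plan is to adapt the approach of~\cite{VK01} and combine it with the uniform-sampling MAX-CUT bound for dense unweighted graphs of~\cite{DBLP:journals/jcss/AlonVKK03,DBLP:journals/jacm/RudelsonV07}. I would begin with two reductions. \emph{Weight perturbation:} since $\hat p_x \in [p_x,(1+\epsilon)p_x]$, every reweighted distance $\dist_S(x,y) = \dist(x,y)/(\hat p_x \hat p_y)$ is perturbed by a factor in $[(1+\epsilon)^{-2},1]$, so every cut value of $S$ is preserved up to the same factor; this allows me to assume $w_S(x)=p_x$ exactly at the cost of rescaling $\epsilon$ by a constant. \emph{Unbiasedness:} for a fixed $A\subseteq X$ and its induced cut $T_A = \{i : Z_i \in A\}$ (with $Z_1,\dots,Z_m$ the i.i.d. samples), expanding the sum over ordered pairs and using the independence of $Z_i,Z_j$ for $i\neq j$ yields
\[
  \E\!\left[\tfrac{\cut_S(T_A)}{m(m-1)}\right] = \sum_{x\in A,\,y\in X\setminus A} p_x p_y \cdot \tfrac{\dist(x,y)}{p_x p_y} = \cut_X(A).
\]

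The easy inequality $\MaxCut(S)/m^2 \geq (1-\epsilon)\MaxCut(X)$ then follows by taking $A=A^*$, the optimal cut of $X$, and controlling the variance of the $U$-statistic $\cut_S(T_{A^*})$: each summand $\dist(Z_i,Z_j)/(p_{Z_i} p_{Z_j})$ is bounded in magnitude by $\lambda^2 Q^2 \dist(Z_i,Z_j)/(q(Z_i)q(Z_j))$, and combined with the standard bound $\MaxCut(X) = \Theta(Q)$ (a random cut achieves $Q/4$), a Chebyshev/Hoeffding-type argument shows that $m=\poly(\epsilon^{-1}\lambda)$ samples suffice for this direction with probability arbitrarily close to $1$.

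The hard inequality $\MaxCut(S)/m^2 \leq (1+\epsilon)\MaxCut(X)$, which must hold uniformly over all $2^m$ subsets of $S$, is the main technical obstacle. Here I would follow the core observation of~\cite{VK01}: sampling each $x\in X$ with probability $\approx q(x)/Q$ is equivalent to uniform vertex sampling in an auxiliary dense unweighted multigraph $H$, obtained by replicating each $x$ into roughly $N\cdot q(x)/Q$ copies with edge multiplicities chosen so that cuts in $H$ (after normalization) equal $\cut_X$. The degree-mass balance then makes $H$ dense in the sense of~\cite{DBLP:journals/jcss/AlonVKK03}, i.e.\ $\MaxCut(H)=\Omega(|V(H)|^2)$, so their sampling theorem gives that a uniform sample of $O(\epsilon^{-4})$ vertices of $H$ preserves $\MaxCut(H)$ to within a $(1+\epsilon)$ factor. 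Transferring the result back through the replication identifies the weighted sample $S$ with such a uniform subsample of $H$, and yields the required uniform cut bound on $S$.

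The principal difficulty will be handling the dampening $\lambda$ together with the fact that $p_x$ is only lower-bounded by $q(x)/(\lambda Q)$. When the replication in $H$ is done proportionally to $p_x$ instead of $q(x)/Q$, the ``effective density'' of $H$ can shrink by a factor of $\lambda$; each invocation of a variance or Cauchy--Schwarz bound picks up an additional $\lambda^2$ factor from the inflated reciprocal weights $1/p_x$ at each endpoint of an edge. Tracking these powers carefully through the $\epsilon^{-4}$ sample-size bound of~\cite{DBLP:journals/jcss/AlonVKK03} should yield the claimed $m = O(\epsilon^{-4}\lambda^{8})$, with the exponent $8$ arising naturally from the two $\lambda^2$ reweighting factors composed with the quartic AVKK dependence.
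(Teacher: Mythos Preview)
Your proposal is correct and follows essentially the same route as the paper: replicate each $x$ into $Np_x$ copies to reduce $\mathcal{D}$-sampling to uniform vertex sampling, then invoke the AVKK/RV bound together with the VK01 inequality $\dist(x,y)\le 4q(x)q(y)/Q$. The paper streamlines your argument in two small ways---it skips the separate ``easy direction'' since the AVKK bound is already two-sided, and it builds a $[0,1]$-edge-weighted graph (observing that the AVKK proof works verbatim for weights in $[0,1]$) rather than an unweighted multigraph, which makes the $\lambda^8$ fall out directly as the rescaling $\epsilon\mapsto\epsilon/\lambda^2$ once the additive AVKK error becomes a multiplicative $O(\lambda^2\epsilon)$.
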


The $O(\epsilon^{-4})$ dependence on $\epsilon$ of \Cref{thm:sampling}
matches a similar $O(\epsilon^{-4})$ sampling complexity bound for unweighted graphs in~\cite{DBLP:journals/jcss/AlonVKK03,DBLP:journals/jacm/RudelsonV07}\footnote{A slightly weaker bound of $O(\epsilon^{-4}\poly\log(\epsilon^{-1}))$ was obtained in~\cite{DBLP:journals/jcss/AlonVKK03},
and \cite{DBLP:journals/jacm/RudelsonV07} gave an improved technical lemma
which can be directly plugged into~\cite{DBLP:journals/jcss/AlonVKK03} to obtain the $O(\epsilon^{-4})$ bound.}.
To the best of our knowledge, this $O(\epsilon^{-4})$ is the state-of-the-art even for the case of unweighted graphs.
Although our proof of \Cref{thm:sampling} is obtained mostly by using the bound in~\cite{DBLP:journals/jcss/AlonVKK03,DBLP:journals/jacm/RudelsonV07} as a black box,
the generalization to metric spaces, as well as the allowance of $\lambda$
which is the error in sampling probability, is new.

\subsection{Proof of \Cref{thm:sampling}}
As mentioned, the plan is to apply the sampling bound proved in~\cite{DBLP:journals/jcss/AlonVKK03,DBLP:journals/jacm/RudelsonV07},
which we restate in \Cref{lemma:eps4}.
In fact, the original statement in~\cite{DBLP:journals/jcss/AlonVKK03,DBLP:journals/jacm/RudelsonV07}
was only made for unweighted graphs, i.e., edge weights in $\{0, 1\}$.
However, we observe that only the fact that the edges weights are between $[0, 1]$ was used in their proof.
Hence, in our statement of \Cref{lemma:eps4} we make this stronger claim of $[0, 1]$ edge weights.

Here, for a graph $G(V, E)$ with weight function $\len_G : E \to \mathbb{R}$ ($\len_G(\cdot) = 1$ for unweighted graphs),
we define the cut function for $S \subseteq X \subseteq V$ (as well as \MaxCut) similarly,
as $\cut_X(S) = \sum_{x \in S, y \in X \setminus S : \{x, y\} \in E} \len_G(x, y)$.

\begin{lemma}[\cite{DBLP:journals/jcss/AlonVKK03,DBLP:journals/jacm/RudelsonV07}]
    \label{lemma:eps4}
    Consider a weighted graph $G(V, E)$ with weights in $[0, 1]$.
    Let $D \subseteq V$ be a uniformly independent sample from $V$ (possibly with repetitions) of $O(\epsilon^{-4})$ points.
    Then with probability at least $0.9$,
    \[
        \left|\frac{1}{|D|^2}\MaxCut(D) - \frac{1}{|V|^2}\MaxCut(V)\right| \leq \epsilon.
    \]
\end{lemma}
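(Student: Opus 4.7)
The plan is to prove the two one-sided bounds $\MaxCut(D)/|D|^2 \geq \MaxCut(V)/|V|^2 - \epsilon$ and $\MaxCut(D)/|D|^2 \leq \MaxCut(V)/|V|^2 + \epsilon$ separately, each with probability at least $0.95$, and then union bound.

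For the lower direction, I would fix an optimal cut $S^* \subseteq V$ realizing $\MaxCut(V)$ and study its restriction $T^* := S^* \cap D$ to the sample. Writing $D = (v_1, \ldots, v_m)$ as a sequence of i.i.d.\ uniform draws from $V$, the quantity
\[
   \cut_D(T^*) \;=\; \sum_{i \neq j} \len_G(v_i,v_j)\,\mathbf{1}[v_i \in S^*]\,\mathbf{1}[v_j \notin S^*]
\]
is a degree-$2$ U-statistic whose summands lie in $[0,1]$ and whose expectation equals $\frac{|D|(|D|-1)}{|V|^2}\cut_V(S^*)$. A Hoeffding-type bound for U-statistics then gives concentration at additive scale $\epsilon|D|^2$ with probability $1 - \exp(-\Omega(\epsilon^2|D|))$, which is comfortably enough once $|D| = \Omega(\epsilon^{-2})$; since $\MaxCut(D) \geq \cut_D(T^*)$, the lower direction follows.

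The upper direction is the harder one, because the maximizing cut in $D$ is chosen \emph{after} the sample is revealed, so no fixed-cut concentration applies directly. Following the rounding scheme of \cite{DBLP:journals/jcss/AlonVKK03}, I would associate to every $T \subseteq D$ a witness cut $S(T) \subseteq V$ via the greedy extension rule $v \in S(T) \iff \sum_{u \in D \setminus T}\len_G(v,u) > \sum_{u \in T}\len_G(v,u)$, and aim to show that
\[
   \cut_V(S(T))/|V|^2 \;\geq\; \cut_D(T)/|D|^2 - \epsilon
\]
holds \emph{simultaneously} for every $T \subseteq D$ with probability $\geq 0.95$. Applied at the maximizer $T = \argmax_{T'}\cut_D(T')$, this certifies $\MaxCut(V)/|V|^2 \geq \MaxCut(D)/|D|^2 - \epsilon$.

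The main obstacle is precisely this uniform-in-$T$ concentration while keeping the sample size at $O(\epsilon^{-4})$. A naive union bound over the $2^{|D|}$ possible $T$'s combined with scalar Hoeffding demands $|D| \gtrsim \epsilon^{-2}\log 2^{|D|}$, which is self-defeating. The crucial technical input from \cite{DBLP:journals/jcss/AlonVKK03}, sharpened by \cite{DBLP:journals/jacm/RudelsonV07}, is a generic-chaining / Dudley-style estimate for the cut norm of a uniformly-sampled principal $|D|\times|D|$ submatrix of the weight matrix; this exploits the intrinsic geometry of the class of cut indicators and avoids the $|D|$-factor loss, yielding the sharp $|D| = O(\epsilon^{-4})$. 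Finally, the extension from $\{0,1\}$ to $[0,1]$ edge weights comes for free: the U-statistic Hoeffding input, the greedy rounding, and the chaining bound all use only that $\len_G \in [0,1]$ and never that the weights are integral, which is exactly the strengthening the lemma asserts.
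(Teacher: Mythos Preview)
The paper does not prove this lemma at all: it is stated purely as a citation from \cite{DBLP:journals/jcss/AlonVKK03,DBLP:journals/jacm/RudelsonV07} and used as a black box in the proof of \Cref{thm:sampling}. The only remark the paper adds is the observation that, although the original statement was for unweighted graphs, the cited proof only uses that edge weights lie in $[0,1]$, so the stronger formulation holds verbatim.

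Your write-up is therefore not a proof of something the paper proves, but rather a sketch of the argument in the cited references. As such a sketch it is reasonable: the lower direction via a fixed-cut U-statistic bound, the upper direction via the greedy witness-cut rounding of \cite{DBLP:journals/jcss/AlonVKK03}, the identification of the uniform-in-$T$ cut-norm concentration as the crux, and the attribution of the sharp $O(\epsilon^{-4})$ bound to the chaining argument of \cite{DBLP:journals/jacm/RudelsonV07} all match the structure of those papers. Your closing remark about the extension to $[0,1]$ weights is exactly the observation the present paper makes. If the intent is to reproduce the paper's treatment, you should simply cite the lemma and note the $[0,1]$ extension; if the intent is an independent proof, be aware that the ``generic-chaining / Dudley-style estimate'' you invoke is itself the substantive content and would need to be either proved or pinned to a precise statement in \cite{DBLP:journals/jacm/RudelsonV07}.
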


Hence, our plan is to define an auxiliary graph $G'$ whose edge weights are in $[0, 1]$,
such that our importance sampling may be interpreted as a uniform sampling from vertices in $G'$.
Eventually, our sampling bound would follow from \Cref{lemma:eps4}.

\paragraph{Defining Auxiliary Graph}
Since we focus on approximate solutions, we can assume that $p_x$'s ($x \in X$)
are of finite precision.
Then, let $N$ be a sufficiently large number such that for all $x \in X$,
$Np_x$ is an integer.
We define an auxiliary graph $G'(X', E' := X' \times X')$,
such that $X'$ is formed by copying each point $x \in X$ for $N p_x$ times, and edge weight
$\len_{G'} (x, y) := \frac{1}{4\lambda^2 Q} \cdot \frac{\dist(x, y)}{\hat{p}_x \hat{p}_y}$.
Clearly, if we let $x^*$ be a uniform sample from $X'$, then for every $x \in X$,
$\Pr[x^* = x] = p_x$.
Hence, this uniform sample $x^*$ is identically distributed as an importance-sample from distribution $\mathcal{D}$ on $X$.
Furthermore, for $x, y \in S$, it holds that
\[
    \dist_S(x, y) = \frac{\dist(x, y)}{\hat{p}_x \hat{p}_y}
    = 4\lambda^2 Q \cdot \len_{G'}(x, y).
\]
Hence, we conclude the following fact.
\begin{fact}
    \label{fact:coupling}
    Let $S' \subseteq X'$ be $m$ uniform samples from $X'$.
    Then, the value $4\lambda^2 Q \cdot \MaxCut(S')$ and $\MaxCut(S)$ are identically distributed.
\end{fact}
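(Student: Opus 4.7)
The plan is to prove the identical-distribution claim by exhibiting an explicit coupling between $S'$ and $S$ on which the deterministic identity $4\lambda^2 Q\cdot \MaxCut(S') = \MaxCut(S)$ holds pointwise. Once this is established, equality of distributions follows immediately. The coupling simply identifies each uniform draw from $X'$ with an importance sample from $\mathcal{D}$ via the natural projection map that sends each copy in $X'$ back to its underlying point in $X$.

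Concretely, I would let $(u_1,\ldots,u_m)$ be i.i.d.\ uniform draws from $X'$, so that $S'$ is this sequence viewed as a multiset of size $m$. For each $i\in[m]$, let $x_i\in X$ denote the underlying point of which $u_i$ is a copy, and set $S := (x_1,\ldots,x_m)$ with point weights $w_S(x_i) := \hat p_{x_i}$. Since $|X'| = \sum_{y\in X}Np_y = N$ and $X'$ contains exactly $Np_x$ copies of each $x\in X$, the marginal of each $x_i$ is $\Pr[x_i = x] = Np_x/N = p_x$, so $(x_1,\ldots,x_m)$ is distributed as $m$ i.i.d.\ samples from $\mathcal{D}$ and $S$ has exactly the distribution prescribed in Theorem~\ref{thm:sampling}.

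Next, I would transfer cuts through the coupling. For any $T\subseteq[m]$, the corresponding bipartitions of $S'$ and $S$ (indexed by $T$ versus $[m]\setminus T$) satisfy
\[
\cut_{S'}(T) \;=\; \sum_{i\in T,\, j\in [m]\setminus T}\len_{G'}(u_i,u_j) \;=\; \frac{1}{4\lambda^2 Q}\sum_{i\in T,\, j\in [m]\setminus T}\frac{\dist(x_i,x_j)}{\hat p_{x_i}\hat p_{x_j}} \;=\; \frac{1}{4\lambda^2 Q}\cut_S(T),
\]
using the definition $\len_{G'}(x,y) = \frac{1}{4\lambda^2 Q}\cdot \frac{\dist(x,y)}{\hat p_x\hat p_y}$ together with the identity $\dist_S(x_i,x_j) = \dist(x_i,x_j)/(\hat p_{x_i}\hat p_{x_j})$ recorded just above the fact. (Whenever $u_i\neq u_j$ are two distinct copies of the same underlying point, the corresponding term vanishes on both sides because $\dist(x_i,x_j)=0$, so no special case arises.) Since $T\mapsto T$ is the same bijection of bipartitions on both sides, taking the maximum over all $T\subseteq[m]$ yields $4\lambda^2 Q\cdot\MaxCut(S') = \MaxCut(S)$ on every outcome of the coupling, which proves the fact.

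The argument is essentially a bookkeeping calculation, so no serious obstacle is expected; the only subtle point is to treat $S'$ and $S$ consistently as multisets (equivalently, as sequences indexed by $[m]$), so that the correspondence of bipartitions is a bijection and the scalar $4\lambda^2 Q$ really does factor out of every cut value uniformly across the sample space.
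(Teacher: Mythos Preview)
Your proposal is correct and follows essentially the same approach as the paper: the paper does not give a separate proof of this fact but simply records, immediately before stating it, the two ingredients you use (that a uniform draw from $X'$ projected to $X$ is distributed as $\mathcal{D}$, and that $\dist_S(x,y) = 4\lambda^2 Q\cdot \len_{G'}(x,y)$). Your write-up makes the coupling and the bijection of bipartitions explicit, which is a nice bit of added rigor, but the underlying idea is identical.
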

Therefore, it suffices to show the $S'$ from \Cref{fact:coupling}
satisfies that $4\lambda^2 Q \cdot \frac{\MaxCut(S')}{|S'|^2}$
is a $(1 + \epsilon)$-approximation to $\MaxCut(X)$, with constant probability.
Our plan is to apply \Cref{lemma:eps4}, but we first need to show, in \Cref{lemma:edge_weight_1}, that the edge weights of $G'$ are in $[0, 1]$,
and in \Cref{lemma:relate_value} that $\MaxCut(X')$
is a $(1 + \epsilon)$-approximation to $\MaxCut(X)$ up to a scaling.
\begin{lemma}
    \label{lemma:edge_weight_1}
    For all $x, y \in X'$, $\len_{G'}(x, y) \leq 1$.
\end{lemma}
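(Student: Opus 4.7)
The plan is to unfold the definition of $\len_{G'}$ and reduce the inequality $\len_{G'}(x,y) \leq 1$ to the purely metric statement $Q \cdot \dist(x,y) \leq 4\, q(x)\, q(y)$, which I expect to establish by two applications of the triangle inequality. Concretely, I would first rewrite
\[
    \len_{G'}(x,y) \;=\; \frac{\dist(x,y)}{4\lambda^2\, Q\, \hat{p}_x\, \hat{p}_y},
\]
and then use $\hat{p}_x \geq p_x \geq q(x)/(\lambda Q)$ together with the analogous bound for $y$ to obtain $4\lambda^2 Q\, \hat{p}_x \hat{p}_y \geq 4\, q(x) q(y) / Q$. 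Thus the lemma reduces to showing $Q \cdot \dist(x,y) \leq 4\, q(x)\, q(y)$.

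For this metric inequality I would invoke the triangle inequality in two complementary ways. First, for every $z \in X$, $q(z) = \sum_{w \in X} \dist(z,w) \leq \sum_{w \in X} \bigl( \dist(z,x) + \dist(x,w) \bigr) = |X|\, \dist(z,x) + q(x)$; summing over $z \in X$ yields $Q \leq 2|X|\, q(x)$, and by symmetry also $Q \leq 2|X|\, q(y)$. Second, for every $z \in X$, $\dist(x,y) \leq \dist(x,z) + \dist(z,y)$, so summing over $z$ gives $|X|\, \dist(x,y) \leq q(x) + q(y)$. Assuming without loss of generality that $q(x) \leq q(y)$, I would then chain the two bounds as
\[
    Q \cdot \dist(x,y) \;\leq\; 2|X|\, q(x) \cdot \frac{q(x)+q(y)}{|X|} \;=\; 2\, q(x)\bigl(q(x)+q(y)\bigr) \;\leq\; 4\, q(x)\, q(y),
\]
which is exactly what is needed.

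I do not anticipate any substantial obstacle, since every step is either a direct application of the triangle inequality or a direct consequence of the hypothesis $p_x \geq q(x)/(\lambda Q)$. The only subtle point worth flagging is the constant $4$ appearing in the definition of $\len_{G'}$: it accounts precisely for the two factors of $2$ produced by the argument (one from $Q \leq 2|X|\, q(x)$, one from bounding $q(x)+q(y) \leq 2\, q(y)$ under the WLOG assumption), so the normalization $\frac{1}{4\lambda^2 Q}$ is tight up to constants.
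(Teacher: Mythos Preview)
Your proposal is correct and follows essentially the same approach as the paper: both reduce $\len_{G'}(x,y)\le 1$ via $\hat p_x\ge p_x\ge q(x)/(\lambda Q)$ to the metric inequality $Q\cdot\dist(x,y)\le 4\,q(x)\,q(y)$. The only difference is that the paper invokes this inequality as a black box (Lemma~7 of~\cite{VK01}), whereas you supply a short self-contained proof via the triangle inequality; your argument is a valid proof of that cited lemma.
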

\begin{proof}
    We need the following fact from~\cite{VK01} (which was proved in~\cite[Lemma 7]{VK01}).
    \begin{lemma}[\cite{VK01}]
        \label{lemma:dist_qx_qy}
        For all $x \in X$, it holds that
        \[
            \frac{\dist(x, y)}{q(x) q(x)} \leq \frac{4}{Q}.
        \]
    \end{lemma}
    Applying \Cref{lemma:dist_qx_qy},
    \[
        \len_{G'}(x, y)
        = \frac{1}{4\lambda^2 Q} \cdot \frac{\dist(x, y)}{\hat{p}_x \hat{p}_y}
        \leq \frac{1}{4\lambda^2 Q} \cdot \frac{ \dist(x, y) }{p_x p_y} \leq 1.
    \]
\end{proof}
\begin{lemma}
    \label{lemma:relate_value}
    $\frac{4\lambda^2 Q}{N^2}\MaxCut(X') \in (1 \pm \epsilon) \cdot  \MaxCut(X)$.
\end{lemma}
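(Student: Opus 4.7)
My plan is to set up an explicit correspondence between cuts of $X$ and cuts of $X'$: for each $S \subseteq X$ let its \emph{inflation} $\bar{S}' \subseteq X'$ be the cut containing all copies of every point in $S$. The core of the argument is to show that some optimal cut of $X'$ is already of this inflated form, which then reduces the two-sided $(1 \pm \epsilon)$ bound to an elementary calculation using $\hat{p}_x \in [p_x, (1+\epsilon)p_x]$.

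\textbf{Optimal cuts are inflated.} Write $n_x := N p_x$ for the number of copies of $x$ in $X'$, and for any cut $S' \subseteq X'$ let $k_x \in \{0, 1, \dots, n_x\}$ be the number of copies of $x$ that lie in $S'$. Because $\dist(x,x) = 0$, intra-group edges carry weight zero, and summing over ordered pairs of groups gives
\[
    \cut_{X'}(S') \;=\; \sum_{x, y \in X} k_x (n_y - k_y) \, \len_{G'}(x, y).
\]
This is multilinear in the variables $(k_x)_{x \in X}$: for each fixed choice of $\{k_y : y \neq x\}$, the expression is linear in $k_x$ (with slope $\sum_y (n_y - 2k_y)\len_{G'}(x,y)$, by symmetry of edge weights), and is therefore maximized at an endpoint $k_x \in \{0, n_x\}$. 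By a standard exchange argument, iterating over $x \in X$ produces an optimum where every group is undivided, i.e.\ a cut of the form $\bar{S}'$ for some $S \subseteq X$. Hence $\MaxCut(X') = \max_{S \subseteq X} \cut_{X'}(\bar{S}')$.

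\textbf{Inflated cuts approximate $\cut_X$.} A direct computation using $n_x = N p_x$ and the definition of $\len_{G'}$ gives
\[
    \cut_{X'}(\bar{S}') \;=\; \sum_{x \in S,\, y \in X \setminus S} n_x n_y \, \len_{G'}(x, y)
    \;=\; \frac{N^2}{4 \lambda^2 Q} \sum_{x \in S,\, y \in X \setminus S} \dist(x, y) \cdot \frac{p_x p_y}{\hat{p}_x \hat{p}_y}.
\]
Since $\hat{p}_x / p_x \in [1, 1 + \epsilon]$ for every $x$, the ratio $p_x p_y / (\hat{p}_x \hat{p}_y)$ lies in $[(1+\epsilon)^{-2}, 1]$, and hence
\[
    \cut_{X'}(\bar{S}') \;\in\; \frac{N^2}{4 \lambda^2 Q} \cdot \bigl[(1+\epsilon)^{-2},\, 1\bigr] \cdot \cut_X(S).
\]
Taking the maximum over $S \subseteq X$ and rescaling by $\tfrac{4 \lambda^2 Q}{N^2}$ then yields $\frac{4\lambda^2 Q}{N^2}\MaxCut(X') \in [(1+\epsilon)^{-2}, 1]\cdot \MaxCut(X) \subseteq (1 \pm \epsilon) \cdot \MaxCut(X)$, after rescaling $\epsilon$ by an absolute constant if necessary.

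\textbf{Main obstacle.} The only nontrivial step is the multilinearity reduction: one must verify that splitting the copies of a single point between the two sides of a cut never strictly helps. This rests on the fact that $\len_{G'}(x, x) = 0$, so intra-group edges contribute nothing and the objective is genuinely linear in each $k_x$. Once this structural fact is in hand, the two-sided $(1 \pm \epsilon)$ bound is just bookkeeping on the multiplicative error introduced by replacing $p_x$ with $\hat{p}_x$ in the edge weights.
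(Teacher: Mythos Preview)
Your proof is correct and follows essentially the same route as the paper's. The only difference is packaging: the paper invokes a black-box lemma from \cite{VK01} (their Lemmas~5 and~6) to assert that replacing integer point-weights by copies leaves \MaxCut unchanged, whereas you prove this fact directly via the multilinearity/exchange argument on the variables $k_x$. Your version is more self-contained; the paper's is shorter by citation. The remaining $(1\pm\epsilon)$ bookkeeping, comparing $p_x p_y$ to $\hat p_x\hat p_y$, is identical in both.
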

\begin{proof}
    Let $\widetilde{X}$ be a point-weighted set formed by
    re-weight points in $X$ with $w_{\widetilde{X}}(x) = N p_x$.
    The following lemma from~\cite{VK01} shows that $\MaxCut(X) = \MaxCut(\widetilde{X})$.
    \begin{lemma}[{\cite[Lemma 5 and Lemma 6]{VK01}}]
        Let $(U, \dist_U)$ be a metric space, and $W \subseteq U$ be a dataset.
        Suppose for every $x \in W$, $\mu_x > 0$ is an integer weight.
        Then the point-weighted set $W'$ obtained from re-weighting each point $x \in W$ by $\mu_x$, satisfies that
        \[
            \MaxCut(W') = \MaxCut(W).
        \]
    \end{lemma}
    Now, we observe that $\widetilde{X}$ can be naturally interpreted a weighted complete graph $\widetilde{G}$,
    where we copy $x \in X$ for $w_{\widetilde{X}}$ times to form the vertex set,
    and the edge length is defined as $\dist_{\widetilde{X}}(x, y)$.
    Notice that the vertex set of $\widetilde{G}$ is exactly $X'$,
    and that the edge length
    \[
        \dist_{\widetilde{X}}(x, y)
        = \frac{\dist(x, y)}{N^2 p_x p_y}
        \in (1\pm \epsilon) \cdot \frac{4\lambda^2 Q}{N^2} \cdot \len_{G'}(x, y).
    \]
    Therefore, we conclude that
    $\MaxCut(X) = \MaxCut(\widetilde{{X}}) \in (1 \pm \epsilon) \cdot \frac{4\lambda^2 Q}{N^2} \cdot \MaxCut(X')$.
    This finishes the proof.
\end{proof}
Now, we are ready to apply \Cref{lemma:eps4}.
Let $S' \subseteq X'$ such that $|S'| = O(\epsilon^{-4})$
be the resultant set by applying \Cref{lemma:eps4} with $G = G'$ (recalling that the promise of $[0, 1]$ edge weights is proved in \Cref{lemma:edge_weight_1}).
Then
\[
    \frac{1}{|S'|^2} \MaxCut(S') \in \frac{\MaxCut(X')}{|X'|^2} \pm \epsilon.
\]
Applying \Cref{lemma:relate_value}, and observe that $|X'| = N$,
the above equivalents to
\begin{align*}
    \frac{4\lambda^2 Q}{|S'|^2} \MaxCut(S')
    &\in (1 \pm \epsilon) \cdot \MaxCut(X) \pm \epsilon \cdot 4\lambda^2 Q \\
    &\in (1 \pm O(\lambda^2 \epsilon)) \cdot \MaxCut(X)
\end{align*}
where the last inequality follows from $\MaxCut(X) \geq \Omega(Q)$.
We finish the proof by rescaling $\epsilon$.

     \section{Streaming Implementations}
\label{sec:streaming}

\begin{theorem}[Streaming Euclidean $\MaxCut$]
\label{thm:streaming}
There is a randomized streaming algorithm that,
given $0 < \epsilon < 1/2$, $p \geq 1$, integers $\Delta, d \geq 1$,
and an input dataset $X \subseteq [\Delta]^d $ presented as a dynamic stream,
uses space $\poly(\epsilon^{-1} d \log \Delta)$
and reports an estimate $\eta>0$ that with high probability (at least ${2}/{3}$)
is a $(1 + \epsilon)$-approximation to $\MaxCut(X)$ in $\ell_p$.
\end{theorem}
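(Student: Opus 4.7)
The plan is to invoke the importance-sampling reduction of \Cref{thm:sampling} with dampening factor $\lambda = O(d\log\Delta)$: it then suffices to draw $m = O(\epsilon^{-4}\lambda^{8}) = \poly(\epsilon^{-1} d\log\Delta)$ independent samples from $X$ whose probabilities dominate $q(x)/(\lambda Q)$, together with $(1+\epsilon)$-approximations $\hat{p}_x$ of the sampling probabilities. To bring this into the streaming model, I replace the true $\ell_p$ distances with the tree distances induced by a randomly-shifted quadtree $T$ on $[\Delta]^d$, whose $L=O(d\log\Delta)$ levels have edge weights equal to the $\ell_p$-diameters of the corresponding squares. Defining $q_T(x):=\sum_{y\in X}\dist_T(x,y)$ and $Q_T:=\sum_{x\in X}q_T(x)$, the standard quadtree analysis yields $\dist_T(x,y)\ge\dist(x,y)$ deterministically and $\E_T[\dist_T(x,y)]=O(d\log\Delta)\cdot\dist(x,y)$, so by Markov's inequality $Q_T\le O(d\log\Delta)\cdot Q$ with constant probability; on this event, $p_x:=q_T(x)/Q_T$ meets the hypothesis of \Cref{thm:sampling} with $\lambda=O(d\log\Delta)$. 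Crucially, the approximation factor $1+\epsilon$ is not affected by the $O(d\log\Delta)$ distortion, which only enters through the sample complexity. Since $T$ is data-oblivious, it is drawn at initialization in $\poly(d\log\Delta)$ space and shared by all parallel copies of the sampler.

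\textbf{Sampler for $q_T$.} For a fixed $T$, I build a streaming primitive that outputs $x^*\in X$ with $\Pr[x^*=x]\propto q_T(x)$, up to the stated dampening. The idea is to exploit the tree structure: at each level $i\le L$, identify the heavy node $h_i$ containing the most $X$-points, and the critical level $k$ with $|X(h_k)|>|X|/2\ge|X(h_{k+1})|$, so that $h_1,\ldots,h_k$ forms a root-to-node path. Because this path concentrates more than half of the mass, all points in a shell $X(h_i)\setminus X(h_{i+1})$ (and in $X(h_k)$ for the boundary shell) share, up to a constant, a single $q_T$-value determined by the path above level $i$. The sampler then (i)~draws a level $i^*\le k$ from a distribution proportional to the total $q_T$-mass of its shell — a quantity that is a known function of $|X(h_j)|$ for $j\le i^*$ and thus can be computed from aggregates maintained in the sketch — and (ii)~draws a point uniformly from the shell. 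Since uniform sampling strictly inside a shell $X(h_{i^*})\setminus X(h_{i^*+1})$ is $\Omega(\Delta)$-hard (\Cref{claim:lb_sampling_light}), I relax step (ii) to uniform sampling from the superset $X\setminus X(h_{i^*+1})$; the path property of $(h_1,\ldots,h_k)$ lets me show that this overestimation inflates $\lambda$ by at most another $O(\log\Delta)$ factor.

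\textbf{Main obstacle: sampling from the light complement.} The hardest step is implementing, in a turnstile stream, a uniform sample from $X\setminus X(h_i)$ where $|X(h_i)|>|X|/2$, with $h_i$ itself discovered from the stream. To bypass this, I use a hashing trick: at initialization, hash each node of $T$ at level $i$ into one of two random buckets; since $|X(h_i)|>|X|/2$, with constant probability the bucket containing $h_i$ is the heavier one, while the other bucket captures in expectation at least half of $X\setminus X(h_i)$. Running two copies of the $\ell_0$-sampler from \Cref{lem:l0sampler} and two copies of the $\ell_0$-estimator from \Cref{lem:l0estimator} in parallel — each filtered by the level-$i$ ancestor's bucket — then lets me both identify the light bucket (via the estimators) and return an $\ell_0$-sample from it, yielding an almost-uniform sample from $X\setminus X(h_i)$ at the cost of a constant factor in $\lambda$. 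Since level-$i$ ancestors are defined by the quadtree alone, the bucketing is a deterministic function of the coordinate vector and is compatible with the turnstile $\ell_0$-sketches.

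\textbf{Putting it together.} Running $m=\poly(\epsilon^{-1}d\log\Delta)$ independent copies of the above sampler in parallel, all sharing the same $T$ but with independent bucket hashes and $\ell_0$-sketches, produces the weighted sample $S$ demanded by \Cref{thm:sampling}; the approximate weights $\hat{p}_x\in(1\pm\epsilon)p_x$ are read off from the sketches that also drive the two-level distribution. At the end of the stream the algorithm computes $\eta := \MaxCut(S)/m^{2}$ offline and returns it. A union bound over (a)~the quadtree event $Q_T=O(d\log\Delta)\cdot Q$, (b)~the correctness of every $\ell_0$-sampler and $\ell_0$-estimator, (c)~the heavy-bucket / light-bucket events, and (d)~the conclusion of \Cref{thm:sampling}, after setting each failure probability to a sufficiently small constant, delivers overall success probability at least $2/3$. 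The per-sample space is $\poly(d\log\Delta)$, so the total space is $m\cdot\poly(d\log\Delta)=\poly(\epsilon^{-1} d\log\Delta)$; the argument is oblivious to $p\ge 1$ because the quadtree distortion has the same $O(d\log\Delta)$ form in every $\ell_p$.
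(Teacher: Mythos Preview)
Your proposal is essentially the paper's own argument: invoke \Cref{thm:sampling} with $\lambda=\poly(d\log\Delta)$, realize the distribution via a randomly-shifted quadtree (\Cref{lem:tree_embeddings_prob}), replace $q_T(x)$ by a level-indexed surrogate $\tilde q_i$ along the heavy path $(h_1,\ldots,h_{\tilde k})$, sample a level then a point from the extended set $X\setminus X(h_{i+1})$ (absorbing an $O(L)$ factor as in \Cref{lem:prob}), and implement the light-complement sample by bucket-hashing the level-$i$ nodes. All the pieces and their roles are correctly identified.

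Two small points to tighten. First, once $|X(h_i)|>|X|/2$, the bucket containing $h_i$ is \emph{always} the heavier one, not merely with constant probability; the constant-probability statement applies only to the other bucket capturing a given light point. Second, and more importantly, a \emph{single} hash does not suffice to report $p^*\in(1\pm\epsilon)\Pr[z^*=x]$: if the hash is shared across the $m$ samplers then some light $x$ land in $h_i$'s bucket and get probability $0$ (violating the $\lambda$-bound), whereas if the hashes are independent across samplers then the samples are i.i.d.\ from the \emph{marginal} law $\E_h[\mathbf{1}_{x\in B_h}/|B_h|]$, which you cannot estimate from a single realization of $|B_h|$. The paper's fix (\Cref{lemma:streaming_light}) is to run $s=\Theta(\log(N\delta^{-1}))$ independent hashes and take the \emph{union} $D_{\mathrm{all}}$ of the light buckets; with high probability $D_{\mathrm{all}}=X\setminus X(h_i)$ exactly, so an $\ell_0$-sampler on the combined sketch is genuinely uniform on the light part and the $\ell_0$-estimator gives $|X\setminus X(h_i)|$ to within $(1\pm\epsilon)$, which is precisely what you need to compute $p^*$ and the level distribution $r_i$.
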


Our algorithm employs importance sampling as formulated in \Cref{thm:sampling},
and thus needs (enough) samples from a distribution $\mathcal{D}$
that corresponds to the input $X \subseteq [\Delta]^d $
with small parameter $\lambda>0$.
Given these samples, the algorithm can estimate $\MaxCut(X)$
by a brute-force search on the (point-weighted) samples,
which can be done using small space.
Note that \Cref{thm:sampling} works for a general metric space,
hence it also applies to the $\ell_p$ case as we require.
We thus focus henceforth on performing importance sampling
from a dataset $X$ that is presented as a dynamic stream,
as formalized next in \Cref{lem:importance_sampling_algorithm}.

\begin{lemma}[Importance-Sampling Algorithm]
\label{lem:importance_sampling_algorithm}
There is a randomized streaming algorithm $\mathcal A$ that,
given $0 < \epsilon < 1/2$, $p \geq 1$, integers $\Delta, d \geq 1$,
and an input dataset $X \subseteq [\Delta]^d$ presented as a dynamic stream,
it uses space $\poly(\epsilon^{-1} d \log \Delta)$
and reports $z^* \in X\cup\{\perp\}$ together with $p^*\in[0,1]$.
The algorithm has a random initialization with success probability at least $0.99$,\footnote{It is convenient to separate the random coins of the algorithm into two groups,
  even though they can all be tossed before the stream starts.
  We refer to the coin tosses of the first group as an initialization step,
  and condition on their ``success'' when analyzing the second group of coins.
  The algorithm cannot tell whether its initialization was successful,
  and thus this event appears only in the analysis (in \Cref{lem:tree_embeddings_prob}).
}
and conditioned on a successful initialization, its random output satisfies:
(1) with probability at least $1 - 1 / \poly(\Delta^d)$,
\[
  \forall x \in X, \qquad
  \Pr[ z^* = x ] \ge \frac{1}{\lambda} \frac{q(x)}{Q},
\]
for $q(x) := \sum_{y \in X} \dist(x,y)$,
$Q := \sum_{x \in X} q(x)$, $\dist = \ell_p$, and
$\lambda := \poly(d \log \Delta)$;
and (2) whenever $z^*\neq \perp$,
\[
  z^*=x\in X \quad\Longrightarrow\quad
  p^* \in (1 \pm \epsilon) \cdot \Pr[z^* = x] .
\]
\end{lemma}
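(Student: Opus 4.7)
The plan is to realize the scheme sketched in the technical overview in a modular fashion. The algorithm's initialization picks a uniformly random shift $v_{\text{shift}}\in[\Delta]^d$, which implicitly determines a randomly-shifted quadtree $T$ on the domain, every $x\in X$ being a distinct leaf; let $\dist_T$ be the induced tree metric and set $q_T(x):=\sum_{y\in X}\dist_T(x,y)$ and $Q_T:=\sum_{y\in X}q_T(y)$. A standard tree-embedding calculation shows that $\dist_T\ge\dist$ pointwise and that $\E_T[\dist_T(x,y)]\le O(d\log\Delta)\cdot\dist(x,y)$, so by Markov's inequality the event $\mathcal{E}:=\{Q_T\le \alpha\cdot d\log\Delta\cdot Q\}$ holds with probability at least $0.99$ for a suitable constant $\alpha$. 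We declare the initialization successful iff $\mathcal{E}$ occurs. Conditioned on $\mathcal{E}$, any sampling rule that guarantees $\Pr[z^*=x]\ge q_T(x)/(\lambda_0 Q_T)$ automatically satisfies the required bound $\Pr[z^*=x]\ge q(x)/(\lambda Q)$ with $\lambda=O(\lambda_0\cdot d\log\Delta)$, since $q_T(x)\ge q(x)$ for all $x$.

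Fix $T$. The first phase of the sampling algorithm identifies the \emph{heavy path} $(h_1,\ldots,h_k)$, where $h_i$ is the level-$i$ cell containing more than $|X|/2$ points, and $k$ is the largest such level. Since the $h_i$ form a root-to-node path, we recover them iteratively: at level $i$ we run an $\ell_0$-sampler (\Cref{lem:l0sampler}) on the restriction of $X$ to $h_i$ and read off its level-$(i+1)$ ancestor; $O(\log\Delta)$ parallel copies with fresh coins, followed by a majority, pin down $h_{i+1}$ with high probability, and we stop as soon as the winning child fails the $|X|/2$-threshold (checked by an $\ell_0$-estimator, \Cref{lem:l0estimator}). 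All cell sizes $|X(h_i)|$ are simultaneously estimated to within $(1\pm\epsilon)$, consuming $\poly(d\log\Delta)$ space overall.

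The second phase is the two-level sampling. Let $D_i$ denote the (deterministic) diameter of cell $h_i$. A direct computation using the tree structure shows that every $x\in X(h_i)\setminus X(h_{i+1})$ has $q_T(x)=\Theta(D_i\cdot|X(h_{i+1})|)$, because its tree paths to the $>|X|/2$ points sitting inside $X(h_{i+1})$ all cross level $i$ and contribute $\Theta(D_i)$ each, dominating nearer contributions; an analogous statement holds for $x\in X(h_k)$. In the ideal procedure one would therefore draw a level $i^*\in\{1,\ldots,k\}$ with probability $\pi_{i^*}\propto D_{i^*}\cdot|X(h_{i^*+1})|\cdot|X(h_{i^*})\setminus X(h_{i^*+1})|$ and then sample uniformly from the shell $X(h_{i^*})\setminus X(h_{i^*+1})$, giving $\Pr[z^*=x]=\Theta(q_T(x)/Q_T)$. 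Since shell sampling is hard in a stream (\Cref{claim:lb_sampling_light}), we instead sample uniformly from the superset $X\setminus X(h_{i^*+1})$, which we implement by the hashing trick: pre-pick a random partition of all tree nodes into two buckets $B_0,B_1$, maintain an $\ell_0$-sampler and an $\ell_0$-estimator for the frequency vector of each bucket (where $x\in X$ is assigned to the bucket of its level-$(i^*+1)$ ancestor), and output the sample from the bucket that turns out to be the \emph{smaller} one---this bucket cannot contain $h_{i^*+1}$, so $X(h_{i^*+1})$ is excluded, and with constant probability the bucket still contains at least half of $X\setminus X(h_{i^*+1})$; $O(\log\Delta)$ independent repetitions boost this to an almost-uniform sample (\Cref{lemma:streaming_light}). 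The reported value $p^*$ is obtained by multiplying the $(1\pm\epsilon)$-accurate estimates of $\pi_{i^*}$, of the selected bucket's size, and of $|X\setminus X(h_{i^*+1})|$, yielding $p^*\in(1\pm O(\epsilon))\cdot\Pr[z^*=x]$ after a rescaling of $\epsilon$.

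The main obstacle will be controlling the bias introduced by replacing the shell $X(h_i)\setminus X(h_{i+1})$ by the strict superset $X\setminus X(h_{i+1})$ in the second phase. Individual probabilities $\Pr[z^*=x]$ can shift between levels by an unbounded factor (\Cref{rem:ChangeOrderOfSum}), so the argument cannot be point-wise; instead the key lemma (\Cref{lem:prob}) will sum contributions of $x$ over all levels $i\le k$ and use (i) the path property of $(h_1,\ldots,h_k)$, which makes the shells nested, and (ii) the monotonicity $|X(h_1)|\ge\cdots\ge|X(h_k)|>|X|/2$, to telescope the loss at any single $x$ down to an $O(\log\Delta)$ factor. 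Combining this with the quadtree distortion and the light-bucket hashing then yields the claimed $\lambda=\poly(d\log\Delta)$ and completes the argument.
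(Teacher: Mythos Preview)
Your overall strategy mirrors the paper's, but the estimator you use for $q_T(x)$ has a genuine gap. You claim that for $x\in X(h_i)\setminus X(h_{i+1})$ one has $q_T(x)=\Theta(D_i\cdot|X(h_{i+1})|)$, arguing that the paths from $x$ to the $>|X|/2$ points inside $X(h_{i+1})$ each contribute $\Theta(D_i)$ and ``dominate nearer contributions''. But this only accounts for points \emph{below} $h_i$; it completely ignores points in $X\setminus X(h_i)$, which branch off \emph{higher} in the tree and can be much farther away. Concretely, place $n/3$ points outside $h_2$ and let the remaining $2n/3$ points descend along the heavy path, with a single point $x$ peeling off at some intermediate level $i$. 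Then $q_T(x)=\Theta(n\beta_2)$ (dominated by distances to the $n/3$ far points), whereas your estimate is $\Theta(n\beta_i)$, off by a factor $2^{\Theta(i)}$ that can be $\poly(\Delta)$. Hence your ``ideal'' sampling rule with $\pi_{i}\propto D_i\cdot|X(h_{i+1})|\cdot|X_i|$ does not yield $\lambda=\poly(d\log\Delta)$. The paper repairs this with the estimator $\tilde q_i:=n\beta_i+\sum_{j\le i}\beta_j(n-|X(h_j)|)$ (\Cref{def:tilde_q}), whose second term captures precisely these far-away contributions; crucially, $\tilde q_i$ is non-increasing in $i$ (\Cref{fact:qt_dec}), and it is this monotonicity of $\tilde q_i$---not the monotonicity of $|X(h_i)|$ you invoke---that drives the $O(L)$ bound in \Cref{lem:prob}.

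Two smaller issues. First, once you sample from the supersets $\Xext_i=X\setminus X(h_{i+1})$, a point $x$ with $\ell(x)=j$ can be produced at every level $i\ge j$, so $\Pr[z^*=x]=\sum_{i\ge\ell(x)} r_i/|\Xext_i|$; your $p^*$, which only multiplies the estimates for the realized level $i^*$, misses the other summands and will not satisfy the $(1\pm\epsilon)$ guarantee. Second, your scheme to recover $h_{i+1}$ by running an $\ell_0$-sampler ``on the restriction of $X$ to $h_i$'' presupposes knowing $h_i$ while processing the one-pass stream; the paper instead draws $\poly(d\log\Delta)$ uniform samples from all of $X$ and post-processes level by level, which you should do as well.
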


The somewhat intricate statement of \Cref{lem:importance_sampling_algorithm}
is very useful to generate many samples with a large success probability.
The obvious approach to generate $t$ samples
is to run $t$ executions of this algorithm (all in parallel on the same stream)
using independent coins, but then the success probability is only $0.99^t$.
Consider instead running $t$ parallel executions,
using the \emph{same initialization coins} but otherwise independent coins,
which requires total space $t\cdot \poly(\epsilon^{-1} d \log \Delta)$.
Then with probability at least $0.99$ the initialization succeeds,
in which case the $t$ executions produce $t$ independent samples,
each of the form $(z^*,p^*)$ and satisfies the two guarantees in the lemma.

\subsection{The Importance-Sampling Algorithm (Proof of \Cref{lem:importance_sampling_algorithm})}
\label{sec:proof_importance_sampling}

Our plan is to implement the importance sampling on a tree metric generated
by a randomized embedding of the input dataset.
The notion of randomized tree embedding was first proposed in~\cite{DBLP:conf/focs/Bartal96} for arbitrary metric spaces,
and the specific embedding that we employ was given by~\cite{Indyk04}
for $\ell_p$ metrics presented as a stream of points.
We describe this tree embedding below.
We stress that our algorithm can be easily implemented in low space
because it does not need to compute the entire embedding explicitly;
for instance, the algorithm's initialization picks random coins,
which determine the embedding but do not require any further computation.

\paragraph{Initialization Step: Randomized Tree Embedding~\cite{DBLP:conf/focs/Bartal96,CCGGP98,Indyk04}}
Assume without loss of generality that $\Delta \geq 1$ is an integral power of $2$,
and let $L := 1 + d \log \Delta$.
Let $\{\calG_i\}_{i=0}^{L}$ be a recursive partitioning of the grid $[2\Delta]^d$
into squares,\footnote{Strictly speaking, these squares are actually hypercubes (sometimes called cells or grids), but we call them squares for intuition.}
as follows.
Start with $\calG_0$ being a trivial partitioning that has one part corresponding to the entire grid $[2\Delta]^d$,
and for each $i \geq 0$, subdivide every square in $\calG_{i}$ into $2^d$ squares of half the side-length,
to obtain a partition $\calG_{i + 1}$ of the entire grid $[2\Delta]^d$.
Thus, every $\calG_{i}$ is a partition into squares of side-length $2^{i}$.
The recursive partitioning $\{\calG_i\}_i$ naturally defines a rooted tree $T$,
whose nodes are the squares inside all the $\calG_i$'s,
that if often called a \emph{quadtree decomposition}
(even though every tree node has $2^d$ children rather than $4$).
Finally, make the quadtree $T$ random
by shifting the entire recursive partitioning by a vector $-v_{\textrm{shift}}$,
where $v_{\textrm{shift}}$ is chosen uniformly at random from $[\Delta]^d$.
(This is equivalent to shifting the dataset $[\Delta]^d$ by $v_{\textrm{shift}}$,
which explains why we defined the recursive partitioning over an extended grid $[2\Delta]^d$.)
Every node in $T$ has a \emph{level} (or equivalently, depth),
where the root is at level $1$,
and the level of every other node is one bigger than that of its parent node.
The \emph{scale} of a tree node is the side-length of the corresponding square.
Observe that leaves of $T$ have scale $2^0$
and thus correspond to squares that contain a single grid point;
moreover, points $x\in [\Delta]^d$ correspond to distinct leaves in $T$.
Define the weight of an edge in $T$ between a node $u$ at scale $2^i$
and its parent as $d^{\frac{1}{p}} \cdot 2^i$ (i.e., the diameter of $u$'s square).
Define a \emph{tree embedding} of $[\Delta]^d$
by mapping every point $x \in [\Delta]^d$ to its corresponding leaf in $T$,
and let the \emph{tree distance} between two points $x, y \in [\Delta]^d$,
denoted $\dist_T(x, y)$,
be the distance in $T$ between their corresponding leaves.
The following lemma bounds the distortion of this randomized tree embedding.
We remark that a better distortion of $O(d^{\max\{\frac{1}{p}, 1 - \frac{1}{p}\}} \log \Delta)$ may be obtained
via a different technique that is less suitable for streaming~\cite{CCGGP98}.

\begin{lemma}[{\cite[Fact 1]{Indyk04}}]
  \label{lem:tree_embeddings}
  Let $T$ be a randomized tree as above.
  Then for all $x,y \in [\Delta]^d$,
  \begin{align*}
& \dist_T(x,y) \geq \dist(x,y) ;
    \\
\E[& \dist_T(x,y)] \leq O\left(d \log \Delta\right) \dist(x,y) .
  \end{align*}
\end{lemma}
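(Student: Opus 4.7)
The plan is to analyze the unique path in $T$ between the leaves corresponding to $x$ and $y$ through their lowest common ancestor (LCA). Let $u$ denote this LCA and suppose $u$ is at scale $2^i$, so that $x$ and $y$ lie in a common square at every scale $\geq 2^i$ and in different squares at every smaller scale. For $j \geq 0$, let $\mathbb{1}_j(x,y)$ be the indicator that $x$ and $y$ lie in different squares of the partition at scale $2^j$ (under the random shift $v_{\textrm{shift}}$), so that $\mathbb{1}_j(x,y) = 1$ iff $j < i$. Since an edge from a node at scale $2^j$ to its parent has weight $d^{1/p} \cdot 2^j$, the path from $x$'s leaf up to $u$ and back down to $y$'s leaf traverses edges of weights $d^{1/p} \cdot 2^0,\ldots, d^{1/p} \cdot 2^{i-1}$ on each side of $u$, and therefore
\[
  \dist_T(x,y) \;=\; 2 d^{1/p} \sum_{j=0}^{L-1} 2^j \cdot \mathbb{1}_j(x,y).
\]

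For the lower bound I would note that the two edges on the path incident to $u$ each have weight $d^{1/p} \cdot 2^{i-1}$, so $\dist_T(x,y) \geq d^{1/p} \cdot 2^i$, which is exactly the $\ell_p$-diameter of $u$'s square. Both $x$ and $y$ lie in that square, so $\dist(x,y)$ is at most this diameter, yielding $\dist(x,y) \leq \dist_T(x,y)$ deterministically.

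For the upper bound I would take expectations in the identity above and bound each separation probability. A coordinate-wise union bound, using that the $k$-th coordinate of $v_{\textrm{shift}}$ is essentially uniform modulo $2^j$ (as $\Delta$ is a power of $2$), gives
\[
  \Pr\!\left[\mathbb{1}_j(x,y)=1\right]
  \;\leq\; \sum_{k=1}^{d} \min\!\left(1, \frac{|x_k - y_k|}{2^j}\right)
  \;\leq\; \min\!\left(1, \frac{\|x-y\|_1}{2^j}\right),
\]
where the outer $\min$ with $1$ is just the trivial probability upper bound. Splitting the resulting sum $\sum_{j} 2^j \Pr[\mathbb{1}_j(x,y)=1]$ at $j^* := \lceil \log_2 \|x-y\|_1 \rceil$, the terms with $j \leq j^*$ form a geometric sum of order $O(\|x-y\|_1)$, and the $O(\log\Delta)$ terms with $j > j^*$ contribute at most $\|x-y\|_1$ each. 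Combining and applying H\"older's inequality $\|x-y\|_1 \leq d^{1-1/p} \|x-y\|_p$ then yields
\[
  \E[\dist_T(x,y)] \;\leq\; O\!\left(d^{1/p} \log\Delta\right) \|x-y\|_1 \;\leq\; O(d \log\Delta) \cdot \dist(x,y),
\]
as claimed. The only step that needs any real care is the separation-probability bound, which hinges on interpreting the uniform shift on $[\Delta]^d$ as uniform modulo $2^j$ in each coordinate for every relevant $j$; everything else is routine geometric-series bookkeeping on the tree path.
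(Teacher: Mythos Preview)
The paper does not give its own proof of this lemma --- it is quoted as \cite[Fact~1]{Indyk04} without argument --- so there is nothing to compare against. Your proof is the standard one and is correct: the deterministic lower bound holds because $\dist_T(x,y)$ is at least the $\ell_p$-diameter of the LCA square, and the expectation bound follows from the per-coordinate separation probability $\min(1,|x_k-y_k|/2^j)$ under the uniform shift, a union bound over coordinates, the geometric/logarithmic split of $\sum_j 2^j\Pr[\mathbb{1}_j=1]$ at scale $\|x-y\|_1$, and finally H\"older's inequality $\|x-y\|_1 \le d^{1-1/p}\|x-y\|_p$ to convert the $d^{1/p}\log\Delta$ prefactor into $d\log\Delta$.
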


\paragraph{Streaming Implementation of Randomized Tree Embedding}
We emphasize that in our definition of the quadtree $T$ is non-standard
as it contains the entire grid $[\Delta]^d$ as leaves
(the standard approach is to recursively partition only squares
that contain at least one point from the dataset $X$).
The advantage of our approach is that the tree is defined obliviously
of the dataset $X$ (e.g., of updates to $X$).
In particular, the leaf-to-root path from a point $x \in [\Delta]^d$
is well-defined regardless of $X$
and can be computed on-the-fly (without constructing the entire tree $T$)
using time and space $\poly(d \log \Delta)$,
providing sufficient information for evaluating the tree distance.

Our streaming algorithm samples such a tree $T$ as an initialization step,
i.e., before the stream starts,
which requires small space because it can be done implicitly
by picking $\poly(d \log \Delta)$ random bits that describe the random shift vector $v$.
Next, we show in \Cref{lem:tree_embeddings_prob}
that this initialization step succeeds with $0.99$ probability,
and on success, every distance $\dist(x,y)$ for $x,y\in X$
is well-approximated by its corresponding $\dist_T(x,y)$.
In this case, the sampling of points $x$ with probability proportional to $q(x)$
can be replaced by sampling with probabilities that are derived from the tree metric.
More specifically, the probability of sampling each $x\in X$
deviates from the desired probability $\frac{q(x)}{Q}$
by at most a factor of $\poly(d \log \Delta)$.
We remark that the event of success does depend on the input $X$,
but the algorithm does not need to know whether the initialization succeeded.

\begin{lemma}
  \label{lem:tree_embeddings_prob}
  For $x \in X$, let $q_T(x) := \sum_{y \in X} \dist_T(x,y)$
  and let $Q_T := \sum_{x \in X}q_T(x)$.
  Then
  \[
    \Pr_T\left[\forall x \in X,\  \frac{q_T(x)}{Q_T}
    \ge \frac{1}{O\left(d \log \Delta\right)}\frac{q(x)}{Q}\right] \geq 0.99.
  \]
\end{lemma}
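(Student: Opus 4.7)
The plan is to leverage the two guarantees of \Cref{lem:tree_embeddings} in an asymmetric way: the no-contraction bound $\dist_T(x,y) \geq \dist(x,y)$ holds deterministically and pointwise, while the expected-distortion bound will be used only once, in aggregate, via Markov's inequality. This is the key observation that lets us turn a single (global) good event into a simultaneous guarantee over all $x \in X$.

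First, I would note that summing $\dist_T(x,y) \geq \dist(x,y)$ over $y \in X$ yields $q_T(x) \geq q(x)$ for every $x \in X$ with probability $1$. Second, by linearity of expectation applied to the upper bound in \Cref{lem:tree_embeddings},
\[
  \E_T[Q_T] \;=\; \sum_{x,y \in X} \E_T[\dist_T(x,y)] \;\leq\; O(d \log \Delta) \cdot Q.
\]
Third, I would apply Markov's inequality to the nonnegative random variable $Q_T$ with threshold, say, $100 \cdot O(d \log \Delta) \cdot Q$, to conclude that with probability at least $0.99$,
\[
  Q_T \;\leq\; O(d \log \Delta) \cdot Q.
\]

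Finally, on this single good event, I would combine the two ingredients: for every $x \in X$ simultaneously,
\[
  \frac{q_T(x)}{Q_T} \;\geq\; \frac{q(x)}{Q_T} \;\geq\; \frac{q(x)}{O(d \log \Delta) \cdot Q} \;=\; \frac{1}{O(d \log \Delta)} \cdot \frac{q(x)}{Q},
\]
which is exactly the claimed inequality, and it is enforced by a single constant-probability event so the $0.99$ bound carries through.

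There is no real obstacle here; the only mild subtlety is recognizing that we do \emph{not} need $q_T(x) \leq O(d \log \Delta) \cdot q(x)$ to hold for every $x$ individually (which would cost a union bound and only hold in expectation anyway). Instead, pairing the deterministic lower bound on each $q_T(x)$ with a single Markov-style upper bound on the aggregate $Q_T$ suffices, and this is what keeps the distortion factor from multiplying across the $|X|$ points.
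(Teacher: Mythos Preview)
Your proposal is correct and follows essentially the same approach as the paper: use the deterministic no-contraction bound to get $q_T(x) \geq q(x)$ for every $x$, apply linearity of expectation plus Markov's inequality once to bound $Q_T \leq O(d\log\Delta)\cdot Q$ with probability $0.99$, and combine. Your write-up is in fact slightly cleaner than the paper's, which presents the same steps but with a minor notational slip in the expectation bound.
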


\begin{proof}
  Fix some $x \in X$. By \Cref{lem:tree_embeddings},
  \begin{equation} \label{eqn:upper_bound_for_numerator}
    q_T(x) = \sum_{y \in X} \dist_T(x,y) \ge \sum_{y \in X} \dist(x,y)  = q(x)
  \end{equation}
  and
  \[
    \E\left[\sum_{y \in X}q_T(y)\right]
    = \E \left[\sum_{y,y' \in X} \dist_T(y,y')\right]
\le O\left(d \log \Delta\right)\sum_{y,y' \in X} \E [\dist(y,y')].
  \]
  By Markov's inequality, with high constant probability,
  \begin{equation}
    \label{eqn:sum_ub}
    \sum_{y \in X}q_T(y) \le
      O\left(d \log \Delta\right)\sum_{y,y' \in X} \E[\dist(y,y')].
  \end{equation}
  We finish the proof by combining \eqref{eqn:sum_ub} and \eqref{eqn:upper_bound_for_numerator}.
\end{proof}

\paragraph{Sampling w.r.t.\ Tree Distance}
In the remainder of the proof,
we assume that the random tree $T$ was already picked
and condition on its success as formulated in \Cref{lem:tree_embeddings_prob}.
This lemma shows that it actually suffices to sample each $x$
with probability proportional to $q_T(x)$.
Next, we provide in \Cref{fact:qt_x} a different formula for $q_T(x)$
that is based on $x$'s ancestors in the tree $T$,
namely, on counting how many data points (i.e., from $X$)
are contained in the squares that correspond to these ancestors.
To this end, we need to set up some basic notation regarding $X$ and $T$.

\paragraph{The Input $X$ in the Tree $T$}
Let $n := |X|$ be the number of input points at the end of the stream.
For a tree node $v \in T$, let $X(v) \subseteq X$ be the set of points from $X$
that are contained in the square corresponding to $v$.
For $x \in X$ and $i \geq 1$,
let $\anc_i(x)$ be the level-$i$ ancestor of $x$ in $T$
(recalling that $x$ corresponds to a leaf).
By definition, $\anc_{L + 1}(x) := x$.
For $0 \leq i \leq L$, let $\beta_i := d^{\frac{1}{p}} \cdot 2^{L + 1 - i}$,
which is the edge-length between a level-$i$ node $u$ and its parent
(since the scale of a level-$i$ node is $2^{L + 1 - i}$).
Due to the tree structure, we have the following representation of $q_T(x)$.
\begin{fact}
  \label{fact:qt_x}
  For every $x \in X$, we have
  $q_T(x) = 2 \sum_{i=0}^{L} \beta_i  \cdot (n - |X(\anc_i(x))|)$.
\end{fact}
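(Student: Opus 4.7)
\textbf{Proof plan for \Cref{fact:qt_x}.}

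The plan is to expand $q_T(x) = \sum_{y \in X} \dist_T(x,y)$ and then swap the order of summation so that the outer sum runs over tree levels instead of over points $y$. This turns a sum of path lengths into a sum weighted by the edge lengths $\beta_i$, each counted with its multiplicity over all $y \in X$.

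First, I would fix $y \in X$ and write $\dist_T(x,y)$ as the total weight of the path in $T$ from the leaf $x$ to the leaf $y$. Since $T$ is a rooted tree and $x,y$ are leaves, this path goes from $x$ up to their lowest common ancestor $u = \mathrm{LCA}(x,y)$ and then symmetrically down to $y$; if $u$ is at level $j$, the path uses the $L+1-j$ ``up'' edges from the chain $x = \anc_{L+1}(x), \anc_L(x), \ldots, \anc_j(x) = u$, followed by the corresponding ``down'' chain to $y$. By the definition of the edge weights in the embedding, the edge between a level-$i$ node and its parent contributes $\beta_i$ (up to the stated indexing convention between $i$ and $i+1$), and by symmetry each such level contributes a factor of $2$ to $\dist_T(x,y)$.

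Second, I would identify the key combinatorial condition: the level-$i$ edge above $\anc_i(x)$ lies on the $x$--$y$ path if and only if $\anc_i(y) \ne \anc_i(x)$, equivalently $y \notin X(\anc_i(x))$. This is immediate from the fact that the cells corresponding to the ancestors of $x$ form a nested chain from the root down to $x$, and a point $y$ separates from $x$ at level $i$ precisely when it leaves this chain. Swapping the order of summation then gives
\[
q_T(x) \;=\; \sum_{y \in X} \dist_T(x,y) \;=\; 2 \sum_{i=0}^{L} \beta_i \cdot \#\bigl\{ y \in X : y \notin X(\anc_i(x)) \bigr\} \;=\; 2 \sum_{i=0}^{L} \beta_i \cdot \bigl(n - |X(\anc_i(x))|\bigr),
\]
which is the claimed identity.

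The main obstacle is really just bookkeeping: verifying the level/edge correspondence at the boundary (the topmost level contributes $0$ because $X(\anc_0(x)) = X$, and the leaf level $\anc_{L+1}(x)=\{x\}$ must be handled consistently with how $\beta_i$ is indexed). Once this correspondence is pinned down, the identity is a short double-counting argument, and no probabilistic or analytic input is needed beyond the definitions of $T$, $\anc_i$, and $X(\cdot)$.
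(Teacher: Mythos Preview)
The paper does not actually prove \Cref{fact:qt_x}; it is stated as an immediate consequence of the tree structure, with the sentence ``Due to the tree structure, we have the following representation of $q_T(x)$'' serving as the entire justification. Your double-counting argument---writing $\dist_T(x,y)$ as twice the sum of edge weights from $x$ up to the LCA, observing that the edge above $\anc_i(x)$ is traversed iff $y\notin X(\anc_i(x))$, and then swapping the order of summation---is exactly the intended unpacking of that sentence, and it is correct. Your caveat about boundary indexing is also apt: the paper's own indexing conventions (e.g., the sum starting at $i=0$ while $\anc_i$ is defined only for $i\ge 1$, and the root being at level $1$) are slightly loose, but since $n-|X(\anc_i(x))|=0$ at the root level these terms vanish and the identity holds regardless.
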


For each level $i$, let $h_i$ be
a level-$i$ node whose corresponding square contains the most points from $X$,
breaking ties arbitrarily.
Next, we wish to identify a \emph{critical} level $k$;
ideally, this is the last level going down from the root, i.e., largest $i$,
such that $|X(h_i)| \ge 0.6n$ (the constant $0.6$ is somewhat arbitrary).
However, it is difficult to find this $k$ exactly in a streaming algorithm,
and thus we use instead a level $\tilde{k}$ that satisfies a relaxed guarantee
that only requires estimates on different $|X(h_i)|$,
as follows.
Let us fix henceforth two constants $0.5 < \sigma^- \leq \sigma^+ \leq 1$.

\begin{definition}[Critical Level]
Level $1 \leq \tilde{k} < L + 1$ is called \emph{$(\sigma^-, \sigma^+)$-critical},
if $|X(h_{\tilde{k}})| \geq \sigma^- n$ and $|X(h_{\tilde{k} + 1})| \leq \sigma^+ n$.
\end{definition}

Suppose henceforth that $\tilde{k}$ is a $(\sigma^-, \sigma^+)$-critical level.
(Such a critical level clearly exists, although its value need not be unique.)
Since $|X(h_i)| \geq |X(h_{i + 1})|$ for every $i<\tilde{k}$
(because $h_i$ contains the most points from $X$ at level $i$),
we know that $|X(h_i)| \geq \sigma^- n$ for every $i \leq \tilde{k}$ (not only for $i = \tilde{k}$),
and $|X(h_i)| \leq \sigma^+ n$ for every $i > \tilde{k}$.

\begin{fact}
  \label{fact:path}
  Each $h_{i}$ is the parent of $h_{i+1}$ for $1 \leq i \leq \tilde{k} - 1$,
  hence $(h_1, \ldots, h_{\tilde k})$ forms a path from the root of $T$.
\end{fact}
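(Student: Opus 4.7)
The plan is to use two simple ingredients: that $|X(h_i)|$ is monotonically non-increasing in $i$, and that distinct nodes at the same level of $T$ correspond to disjoint squares and hence to disjoint subsets of $X$. Since $\sigma^- > 1/2$, every $h_i$ with $i \leq \tilde k$ will be a strict-majority subset of $X$, and two strict majorities cannot be disjoint.

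First I would establish that $|X(h_i)| > n/2$ for every $1 \leq i \leq \tilde k$. For the monotonicity, let $p$ be the level-$i$ parent of $h_{i+1}$; then $X(h_{i+1}) \subseteq X(p)$ gives $|X(p)| \geq |X(h_{i+1})|$, and by the defining maximality of $h_i$ at level $i$ we have $|X(h_i)| \geq |X(p)|$. Chaining this inequality down to the critical level yields $|X(h_i)| \geq |X(h_{\tilde k})| \geq \sigma^- n > n/2$ for all $i \leq \tilde k$.

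Next, for fixed $1 \leq i \leq \tilde k - 1$, I would argue that the parent $p$ of $h_{i+1}$ must coincide with $h_i$. Suppose for contradiction that $p \neq h_i$. Since $p$ and $h_i$ are distinct nodes at level $i$, their corresponding squares are disjoint, so $X(p) \cap X(h_i) = \emptyset$. But $|X(p)| \geq |X(h_{i+1})| > n/2$ and $|X(h_i)| > n/2$, which would force $|X(p) \cup X(h_i)| > n$, contradicting $X(p) \cup X(h_i) \subseteq X$ and $|X| = n$. Hence $p = h_i$, i.e., $h_i$ is the parent of $h_{i+1}$. Iterating for $i = 1, \ldots, \tilde k - 1$ exhibits $(h_1, \ldots, h_{\tilde k})$ as a chain of parent-child edges, and since level $1$ contains only the root of $T$, this chain is a root-to-$h_{\tilde k}$ path.

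I do not expect any genuine obstacle here; the whole statement is engineered by the threshold $\sigma^- > 1/2$ precisely so that the majority/disjointness argument goes through. The one step where care is needed is the monotonicity of $|X(h_i)|$, which is exactly what lets the critical-level bound propagate up to all levels $i \leq \tilde k$ and thereby activates the majority argument at each step.
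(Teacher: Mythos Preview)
Your proposal is correct and matches the paper's intended argument. The paper states this as a Fact without an explicit proof, but the sentence immediately preceding it already records the monotonicity $|X(h_i)| \geq |X(h_{i+1})|$ (for the same reason you give), so your majority/disjointness argument is exactly the natural completion the paper leaves to the reader.
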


Next, we further ``simplify'' the representation of $q_T(x)$,
by introducing an approximate version of it that requires even less information about $x$.
Specifically, we introduce in \Cref{def:tilde_q}
a sequence of $O(L)$ values that are independent of $x$,
namely, one value $\tilde{q}_i$ for each level $i\leq \tilde{k}$,
and then we show in \Cref{lem:estimator} that for every $x \in X$,
we can approximate $q_T(x)$ by one of these $O(L)$ values,
namely, by $\tilde{q}_i$ for a suitable level $i=\ell(x)$.

\begin{definition}[Estimator for $q_T$]
  \label{def:tilde_q}
  For $1 \leq i \leq \tilde{k}$, define
  \[
    \tilde{q}_i := n\beta_i + \sum_{j \leq i} \beta_j \cdot (n - |X(h_j)|).
  \]
\end{definition}

\paragraph{Relating $q_T$ and $\tilde{q}$}
For $x\in X$, let $\ell(x)$ be the maximum level $1 \leq j \leq \tilde{k}$
such that $\anc_j(x) = h_j$.
This is well-defined, because $j = 1$ always satisfies that $\anc_j(x) = h_j$.
The next lemma shows that $q_T(x)$ can be approximated
by $\tilde{q}_i$ for $i = \ell(x)$.

\begin{lemma}
\label{lem:estimator}
Let $\tilde{k}$ be a $(\sigma^-, \sigma^+)$-critical level. Then
\[
  \forall x\in X,
  \qquad
  \tilde{q}_{\ell(x)}  = \Theta(1) \cdot q_T(x).
\]
\end{lemma}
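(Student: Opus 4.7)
Using \Cref{fact:qt_x}, I will split the sum
$q_T(x)/2 = \sum_{i=0}^L \beta_i(n - |X(\anc_i(x))|)$
at $\ell := \ell(x)$ into a \emph{shallow} part ($i \leq \ell$) and a \emph{deep} part ($i > \ell$), and argue that these match, up to constants, the summation in the definition of $\tilde q_\ell$ and its additive term $n\beta_\ell$, respectively. The shallow part is the easy one: by \Cref{fact:path}, the nodes $(h_1, \ldots, h_{\tilde k})$ form a root-to-$h_{\tilde k}$ path in $T$, and together with the defining property $\anc_\ell(x) = h_\ell$ this forces the leaf-to-root path of $x$ to pass through each $h_i$ for $i \leq \ell$, so $\anc_i(x) = h_i$ throughout that range. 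Consequently, the shallow contribution is \emph{exactly} $\sum_{j \leq \ell} \beta_j (n - |X(h_j)|)$, matching the first part of $\tilde q_\ell$.

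For the deep part, the upper bound is a geometric-sum computation: since $\beta_i = d^{1/p} 2^{L+1-i}$ halves when $i$ grows by one, we have $\sum_{i > \ell} \beta_i \leq \beta_\ell$, and combined with the trivial $n - |X(\anc_i(x))| \leq n$ the deep contribution is at most $n\beta_\ell$. The matching lower bound $\Omega(n\beta_\ell)$ requires exhibiting a single level $i^* > \ell$ at which $n - |X(\anc_{i^*}(x))|$ is a constant fraction of $n$. In the \emph{interior} case $\ell < \tilde k$, I take $i^* := \ell + 1$: by maximality of $\ell$, the ancestor $\anc_{i^*}(x)$ differs from $h_{i^*}$, and since distinct level-$i^*$ squares of the quadtree are disjoint as subsets of $X$, we get $|X(\anc_{i^*}(x))| \leq n - |X(h_{i^*})| \leq (1-\sigma^-) n$, so this single term contributes at least $\beta_{i^*} \cdot \sigma^- n = \Omega(n\beta_\ell)$ via $\beta_{i^*} = \beta_\ell/2$.

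The main obstacle (and the motivation for the seemingly extra additive term $n\beta_\ell$ in the definition of $\tilde q_\ell$) is the boundary case $\ell = \tilde k$, in which the interior argument is unavailable because $\ell(x)$ is defined only up to level $\tilde k$. Here I take $i^* := \tilde k + 1$: the ancestor $\anc_{i^*}(x)$ is some level-$(\tilde k + 1)$ node, and the critical-level hypothesis yields $|X(\anc_{i^*}(x))| \leq |X(h_{\tilde k + 1})| \leq \sigma^+ n$, giving a contribution of at least $\beta_{i^*}(1 - \sigma^+)n = \Omega(n \beta_\ell)$ provided the parameters are chosen with $\sigma^+$ bounded away from $1$ (as used elsewhere in the paper). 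Combining the shallow and deep estimates then gives $q_T(x) = \Theta(\tilde q_{\ell(x)})$, with hidden constants depending only on $\sigma^-$ and $1 - \sigma^+$.
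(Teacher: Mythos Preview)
Your proposal is correct and follows essentially the same approach as the paper: split $q_T(x)/2$ at level $\ell(x)$, identify the shallow part exactly with $\sum_{j\le \ell}\beta_j(n-|X(h_j)|)$ via \Cref{fact:path}, and show the deep part is $\Theta(n\beta_\ell)$ using $|X(h_i)|\ge \sigma^- n$ for $i\le \tilde k$ and $|X(h_{\tilde k+1})|\le \sigma^+ n$. The only cosmetic difference is that the paper bounds \emph{every} deep term by $[\min\{\sigma^-,1-\sigma^+\},1]\cdot n$ and then sums, whereas you exhibit a single witness level $i^*=\ell+1$ for the lower bound; both arguments yield the same constants (up to the same factor~$1/2$ from $\beta_{\ell+1}=\beta_\ell/2$).
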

\begin{proof}
  \begin{align}
    \frac{1}{2} q_T(x)
    &= \sum_{i=0}^{L}\beta_i \cdot (n - |X(\anc_i(x))|) \nonumber \\
    &= \sum_{i \leq \ell(x)} \beta_i \cdot (n - |X(h_i)|)
      + \sum_{i > \ell(x)} \beta_i \cdot (n - |X(\anc_i(x))|) \label{eqn:anc_to_h} \\
    &\in \sum_{i \leq \ell(x)} \beta_i \cdot (n - |X(h_i)|)
      + [\min\{ \sigma^-, 1 - \sigma^+ \}, 1] \cdot n \sum_{i > \ell(x)}\beta_i \label{eqn:use_mu} \\
    &\in  \sum_{i \leq \ell(x)} \beta_i\cdot (n - |X(h_i)|)
      + [\min\{ \sigma^-, 1 - \sigma^+ \}, 1] \cdot n \beta_{\ell(x)} \nonumber \\
    &\in [\min\{ \sigma^-, 1 - \sigma^+ \}, 1] \cdot \tilde{q}_{\ell(x)}. \nonumber
  \end{align}
  In the above, \eqref{eqn:anc_to_h} follows from the fact that $\anc_i(x) = h_i$ for $i \leq \ell(x)$ (by the definition of $\ell(x)$ and the property that $(h_1, \ldots, h_{\tilde k})$ forms a path from \Cref{fact:path}).
  \eqref{eqn:use_mu} follows from the definition of $(\sigma, \mu)$-critical
  and the definition of $\ell$.
\end{proof}

The next lemma shows that the sequence $\tilde{q}_1,\ldots,\tilde{q}_{\tilde k}$
is non-increasing.
\begin{fact}
  \label{fact:qt_dec}
  $\tilde{q}_1 = \beta_1 n$,
  and for every $2 \leq i \leq \tilde k$, we have $\tilde{q}_{i} \leq \tilde{q}_{i - 1}$.
\end{fact}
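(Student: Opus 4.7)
The plan is to compute $\tilde{q}_1$ directly from the definition and then to obtain monotonicity by computing the telescoping difference $\tilde{q}_i - \tilde{q}_{i-1}$.

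For $\tilde{q}_1$: substituting $i = 1$ in \Cref{def:tilde_q} gives $\tilde{q}_1 = n\beta_1 + \beta_1 (n - |X(h_1)|)$. Now $h_1$ is the unique level-$1$ node, i.e., the root of $T$, whose corresponding square is the entire grid $[2\Delta]^d$. Since $X \subseteq [\Delta]^d \subseteq [2\Delta]^d$, we get $|X(h_1)| = n$, and the second term vanishes, giving $\tilde{q}_1 = n\beta_1$ as claimed.

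For the monotonicity, fix $2 \leq i \leq \tilde k$ and compute
\[
  \tilde{q}_i - \tilde{q}_{i-1}
  = n(\beta_i - \beta_{i-1}) + \beta_i (n - |X(h_i)|),
\]
since all other terms in the sum $\sum_{j \leq i}\beta_j(n - |X(h_j)|)$ appear identically in both $\tilde{q}_i$ and $\tilde{q}_{i-1}$. By the definition $\beta_i = d^{1/p} \cdot 2^{L+1-i}$, we have $\beta_{i-1} = 2\beta_i$, so $\beta_i - \beta_{i-1} = -\beta_i$. Substituting yields
\[
  \tilde{q}_i - \tilde{q}_{i-1} = -n\beta_i + \beta_i(n - |X(h_i)|) = -\beta_i |X(h_i)| \leq 0,
\]
which gives $\tilde{q}_i \leq \tilde{q}_{i-1}$.

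There is no real obstacle here; the only subtle point is recognizing that the definition of $\beta_i$ as $d^{1/p} \cdot 2^{L+1-i}$ (rather than simply $2^{L-i}$ or similar) makes the geometric ratio exactly $2$, which is what causes the two $n\beta_i$ contributions to cancel cleanly. Both parts of the fact then follow from at most a few lines of algebra, and no probabilistic or streaming argument is needed.
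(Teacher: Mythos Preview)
Your proof is correct and follows essentially the same approach as the paper: both compute the telescoping difference and use $\beta_{i-1}=2\beta_i$ to simplify it to $\pm\beta_i|X(h_i)|$, and your treatment of the base case $i=1$ just spells out what the paper calls ``immediate.''
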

\begin{proof}
  The fact for $i = 1$ is immediate. Now consider $i \geq 2$. We have
  \[
    \tilde{q}_{i - 1} - \tilde{q}_{i}
    = n(\beta_{i - 1} - \beta_i) - \beta_i \cdot (n - |X(h_i)| )
    = \beta_i \cdot |X(h_i)|
    \geq 0,
  \]
  which verifies the lemma.
\end{proof}

\paragraph{Alternative Sampling Procedure}
Recall that level $\tilde k$ is assumed to be $(\sigma^-, \sigma^+)$-critical
for fixed constants $0.5 < \sigma^- \leq \sigma^+ \leq 1$.
We plan to sample $x \in X$ with probability proportional to $\tilde{q}_{\ell(x)}$,
and by \Cref{lem:estimator} this only loses an $O(1)$ factor in the bound $\lambda$ needed for importance sampling (as in \Cref{lem:importance_sampling_algorithm}).
For $1 \leq i \leq \tilde k$, define $X_i := \{ x \in X \mid \ell(x) = i \}$.
Notice that $\{X_i\}_{i=1}^{\tilde k}$ forms a partition of $X$, and
\begin{equation}
  \label{eqn:Xi}
  X_i =
  \begin{cases}
    X(h_{i}) \setminus X(h_{i+1}) & \text{if $1 \leq i \leq \tilde{k} - 1$;} \\
    X(h_{\tilde k}) & \text{if $i = \tilde{k}$.}
  \end{cases}
\end{equation}
By definition, points in the same $X_i$ have the same $\tilde{q}_{\ell(x)}$,
and thus also the same sampling probability.
A natural approach to sampling a point from $X$ with the desired probabilities
is to first pick a random $i\in[\tilde{k}]$ (non-uniformly)
and then sample uniformly a point from that $X_i$.
But unfortunately, it is impossible to sample uniformly from $X_i$ in streaming
(this is justified in \Cref{claim:lb_sampling_light}),
and thus we shall sample instead from an ``extended'' set $\Xext_i\supseteq X_i$,
defined as follows.
\begin{equation}
  \label{eqn:Xext}
  \Xext_i :=
  \begin{cases}
    X \setminus X(h_{i+1}) & \text{if $1 \leq i \leq \tilde{k} - 1$;} \\
    X(h_{\tilde k}) & \text{if $i = \tilde{k}$.}
  \end{cases}
\end{equation}
The path structure of $\{h_i\}_{i}$ (\Cref{fact:path}) implies the following.

\begin{fact}
  \label{fact:xext}
  For every $1 \leq i < \tilde{k}$, we have
  $\Xext_i = X_1 \cup \ldots \cup X_i$.
\end{fact}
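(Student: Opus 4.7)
The plan is a direct set-theoretic manipulation that exploits the path property of $(h_1, \ldots, h_{\tilde k})$ established in \Cref{fact:path}. First I would record the two ingredients I need. (i) Since $h_1$ is a level-$1$ node, i.e., the root of $T$, and the root's square is the entire grid $[2\Delta]^d \supseteq [\Delta]^d \supseteq X$, we have $X(h_1) = X$. (ii) By \Cref{fact:path}, $h_{j+1}$ is a child of $h_j$ for every $1 \le j < \tilde k$, so the corresponding squares are nested, and consequently $X(h_1) \supseteq X(h_2) \supseteq \cdots \supseteq X(h_{\tilde k})$.

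Given the nested chain, the sets $\{X_j\}_{j=1}^{i}$ for $i < \tilde k$ partition $X(h_1) \setminus X(h_{i+1})$. Concretely, using \eqref{eqn:Xi} and a telescoping union over nested sets,
\[
  \bigcup_{j=1}^{i} X_j
  = \bigcup_{j=1}^{i} \bigl( X(h_j) \setminus X(h_{j+1}) \bigr)
  = X(h_1) \setminus X(h_{i+1})
  = X \setminus X(h_{i+1}),
\]
where the last equality uses ingredient (i). By the definition in \eqref{eqn:Xext}, for $i < \tilde k$ the right-hand side is exactly $\Xext_i$, which yields the claimed identity.

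There is no real obstacle here; the only thing to double-check is that $X(h_1) = X$, which is immediate from the fact that the recursive partitioning $\{\calG_j\}_j$ is defined over $[2\Delta]^d$ and $X \subseteq [\Delta]^d$ after the random shift. Everything else is a telescoping consequence of the nesting guaranteed by \Cref{fact:path}.
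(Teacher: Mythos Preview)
Your proof is correct and is exactly the argument the paper has in mind: it states the fact immediately after saying ``The path structure of $\{h_i\}_{i}$ (\Cref{fact:path}) implies the following'' without giving a separate proof, and your telescoping along the nested chain $X(h_1)\supseteq\cdots\supseteq X(h_{\tilde k})$ together with $X(h_1)=X$ is precisely the intended justification.
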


We describe in \Cref{alg:offline_sampling} a procedure
for sampling $x \in X$ with probability proportional to $\tilde{q}_{\ell(x)}$,
based on the above approach
of picking a random $i\in[\tilde{k}]$ (from a suitable distribution)
and then sampling uniformly a point from that $\Xext_i$.
We then prove in \Cref{lem:prob} that this procedure samples from $X$
with probabilities proportional to $\tilde{q}_{\ell(x)}$, up to an $O(L)$ factor.

\begin{remark}
\label{rem:ChangeOrderOfSum}
Sampling from the extended sets ($\Xext_i$ instead of $X_i$)
can significantly bias the sampling probabilities,
because the ``contribution'' of a point $x\in X$
can increase by an unbounded factor.
On the one hand, this can increase the sampling probability of that $x$,
which is not a problem at all.
On the other hand, it might increase the total contribution of all points
(and thus decrease some individual sampling probabilities),
but our analysis shows that this effect is bounded by an $O(L)$ factor.
The intuition here is that
$q(x)$ represents the sum of distances from $x$ to all other points $y\in X$,
and we can rearrange their total $\sum_x q(x)$ by the ``other'' point $y\in X$,
and the crux now is that the contribution of each $y\in X$
increases by at most $O(L)$ factor.
\end{remark}

\begin{algorithm}[ht]
  \caption{Alternative sampling procedure (offline)}
  \label{alg:offline_sampling}
  \begin{algorithmic}[1]
    \State draw a random $i^*$ where each $1 \leq i \leq \tilde{k}$ is picked with probability
    $r_i := \frac{|\Xext_i| \tilde{q}_i}{\sum_{j=1}^{\tilde k} |\Xext_j| \tilde{q}_j}$ \label{line:offline_i}
    \State draw $x \in \Xext_i$ uniformly at random \label{line:offline_x}
    \State return $z^* = x$ as the sample, together with $p^* = \sum_{i = \ell(x)}^{\tilde{k}} \frac{r_i}{|\Xext_i|} $ as its sampling probability
    \label{line:offline_ret}
  \end{algorithmic}
\end{algorithm}

\begin{lemma}
  \label{lem:prob}
  \Cref{alg:offline_sampling} samples every $x \in X$ with probability
  $\Pr[z^* = x] = \sum_{i = \ell(x)}^{\tilde{k}} \frac{r_i}{|\Xext_i|}$,
  exactly as line 3 reports in $p^*$, and furthermore this is bounded by
  $\Pr[z^* = x]
  \geq \Omega\left(\frac{1}{L}\right) \frac{\tilde{q}_{\ell(x)}}
    { \sum_{x \in X} \tilde{q}_{\ell(x)} }$.
\end{lemma}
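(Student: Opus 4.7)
The plan is to compute $\Pr[z^*=x]$ directly from the two-step procedure and then bound its denominator by $O(L)$ times the target normalization $\sum_{x\in X}\tilde{q}_{\ell(x)}$.

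First, using that step~1 selects $i^* = i$ with probability $r_i$ and step~2 picks $x$ uniformly from $\Xext_{i^*}$, I would write
\[
  \Pr[z^* = x] \;=\; \sum_{i\,:\,x\in\Xext_i} \frac{r_i}{|\Xext_i|}
  \;=\; \frac{1}{Z}\sum_{i\,:\,x\in\Xext_i}\tilde{q}_i,
  \qquad\text{where } Z := \sum_{j=1}^{\tilde{k}}|\Xext_j|\tilde{q}_j,
\]
using the identity $r_i/|\Xext_i| = \tilde{q}_i/Z$ that comes straight from the definition of $r_i$. By \Cref{fact:xext} (together with the boundary case $\Xext_{\tilde{k}} = X_{\tilde{k}}$), the set of indices $i$ with $x \in \Xext_i$ is the contiguous range starting at $\ell(x)$, which recovers the closed form for $p^*$ reported by the algorithm and gives the first equality of the lemma.

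For the lower bound I would keep only the single term $i = \ell(x)$, reducing the claim to
\[
  \Pr[z^* = x] \;\ge\; \frac{\tilde{q}_{\ell(x)}}{Z}
  \qquad\text{and}\qquad
  \sum_{x\in X}\tilde{q}_{\ell(x)} = \sum_{i=1}^{\tilde{k}}|X_i|\tilde{q}_i,
\]
so the task collapses to proving $Z \le O(L)\cdot \sum_{i=1}^{\tilde{k}}|X_i|\tilde{q}_i$. To establish this, I would expand $|\Xext_j| = \sum_{k\le j}|X_k|$ for $j < \tilde{k}$ using \Cref{fact:xext}, swap the order of summation to get $\sum_{j<\tilde{k}}|\Xext_j|\tilde{q}_j = \sum_{k<\tilde{k}}|X_k|\sum_{j=k}^{\tilde{k}-1}\tilde{q}_j$, and then invoke \Cref{fact:qt_dec} (the sequence $\tilde{q}_\cdot$ is non-increasing) to bound the inner sum by $(\tilde{k}-k)\tilde{q}_k \le L\tilde{q}_k$. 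Adding the separate $j = \tilde{k}$ term $|X_{\tilde{k}}|\tilde{q}_{\tilde{k}}$ yields $Z \le (L+1)\sum_i|X_i|\tilde{q}_i$, as desired.

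The main obstacle is exactly the phenomenon highlighted in \Cref{rem:ChangeOrderOfSum}: enlarging $X_i$ to $\Xext_i$ overcounts each $X_k$ into $\Xext_k,\Xext_{k+1},\ldots,\Xext_{\tilde{k}-1}$, which in principle could blow up the denominator $Z$ by an unbounded factor and ruin the lower bound. The swap-plus-monotonicity step is precisely the pressure point that contains this: because $\tilde{q}_\cdot$ is non-increasing along the path $(h_1,\ldots,h_{\tilde{k}})$, the $O(L)$ duplicate contributions of each $X_k$ are each weighted by at most $\tilde{q}_k$ itself, so the total inflation of $Z$ is kept to an $O(L)$ factor, which is all that the lemma promises.
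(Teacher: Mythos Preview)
Your proposal is correct and follows essentially the same route as the paper's proof: both compute $\Pr[z^*=x]$ via the two-step sampling and the identity $r_i/|\Xext_i|=\tilde{q}_i/Z$, drop all but the $i=\ell(x)$ term for the lower bound, and then bound $Z$ by expanding $|\Xext_j|=\sum_{i\le j}|X_i|$ (\Cref{fact:xext}), swapping the summation order, and applying the monotonicity of $\tilde{q}_i$ (\Cref{fact:qt_dec}) to get $Z\le (L+1)\sum_i|X_i|\tilde{q}_i$.
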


\begin{proof}
  Observe that $x \in X_{\ell(x)}$,
  and by \Cref{fact:xext}, this point $x$ can only be sampled
  for $i \geq \ell(x)$. Therefore,
  $\Pr[z^* = x] = \sum_{i=\ell(x)}^{\tilde{k}} \frac{r_i}{|\Xext_i|} = p^*$.
  We bound this probability by
  \begin{equation}
    \label{eqn:prob}
    \Pr[z^* = x]
    = \sum_{i \geq \ell(x)} \frac{r_i}{|\Xext_i|}
    = \sum_{i \geq \ell(x)} \frac{\tilde{q}_i}
      { \sum_{j=1}^{\tilde k}{|\Xext_j|\tilde{q}_j} }
    = \frac{\sum_{i \geq \ell(x)} \tilde{q}_i}
      { \sum_{j=1}^{\tilde k}{|\Xext_j|\tilde{q}_j} }
    \geq \frac{\tilde{q}_{\ell(x)}}
      { \sum_{j=1}^{\tilde k}{|\Xext_j|\tilde{q}_j} }.
  \end{equation}
  Next, to bound the denominator ${ \sum_{j=1}^{\tilde k}{|\Xext_j|\tilde{q}_j} }$,
  observe that $|\Xext_j| =  \sum_{i = 1}^{j} |X_i|$ for all $j<\tilde{k}$
  (by \Cref{fact:xext}),
  and therefore
  \begin{align*}
    \sum_{j=1}^{\tilde k}{|\Xext_j|\tilde{q}_j}
    &= |X_{\tilde k}|\tilde{q}_{\tilde k}
      + \sum_{j = 1}^{\tilde{k} - 1} \sum_{i = 1}^{j}|X_i| \tilde{q}_j
      = |X_{\tilde k}|\tilde{q}_{\tilde k}
      + \sum_{i = 1}^{\tilde{k} - 1}\sum_{j = i}^{\tilde{k} - 1} |X_i| \tilde{q}_j
    \leq |X_{\tilde k}|\tilde{q}_{\tilde k} + \tilde{k} \cdot \sum_{i = 1}^{\tilde{k} - 1}|X_i| \tilde{q}_i \\
    &\leq (L + 1) \sum_{i=1}^{\tilde k}{ |X_i| \tilde{q}_i }
= (L + 1) \sum_{x \in X}{\tilde{q}_{\ell(x)}},
  \end{align*}
  where the first inequality is by the monotonicity of $\tilde{q}_i$'s (\Cref{fact:qt_dec}).
  Combining this with \eqref{eqn:prob}, the lemma follows.
\end{proof}

\paragraph{Implementing \Cref{alg:offline_sampling} in Streaming}
To implement \Cref{alg:offline_sampling} in streaming,
we first need a streaming algorithm that finds a critical level $\tilde{k}$
using space $O(\poly(d \log \Delta))$.
We discuss this next.

\paragraph{Finding $\tilde{k}$}
For each level $i$, we draw $\poly(d \log \Delta)$ samples $S_i \subseteq X$ uniformly at random from $X$.
We then count the number of samples that lie in each tree node (square) at level $i$,
and let $m_i$ be the maximum count.
We let $\tilde{k}$ be the largest level $i$
such that $\frac{m_i}{|S_i|} \geq 0.6$.
By a standard application of Chernoff bound,
with probability at least $1 - \poly(\Delta^d)$,
this level $\tilde{k}$ is $(0.55, 0.65)$-critical.
Moreover, this process can be implemented in streaming using space $\poly(d \log \Delta)$,
by maintaining, for each level $i$,
only $|S_i|=\poly(d \log \Delta)$ independent $\ell_0$-samplers (\Cref{lem:l0sampler}) on the domain $[\Delta]^d$.
A similar approach can be used
to $(1+\epsilon)$-approximate the size of $X(h_i)$ for every $i \leq \tilde{k}$,
and also sample uniformly from these sets,
using space $O(\poly(\epsilon^{-1}d\log \Delta))$
and with failure probability $1 - 1 / \poly(\Delta^d)$
(by using \Cref{lem:l0estimator,lem:l0sampler}).

\paragraph{Estimating and Sampling from $X \setminus X(h_i)$}
We also need to estimate $\tilde{q}_i$ and $|\Xext_i|$,
and to sample uniformly at random from $\Xext_i$, for every $i \leq \tilde{k}$.
The case $i = \tilde{k}$ was already discussed,
because $\Xext_{\tilde k} = X(h_{\tilde k})$.
It remains to consider $i < \tilde{k}$, in which case we need
to $(1 \pm \epsilon)$-approximate the size of $X \setminus X(h_i)$,
and also to sample uniformly at random from that set,
and we can assume that $|X(h_i)| > 0.5 n$.
We provide such a streaming algorithm in \Cref{lemma:streaming_light} below,
which we prove in \Cref{sec:proof_streaming_light}.
This lemma is stated in a more general form
that may be of independent interest,
where the input is a frequency vector $x\in \mathbb{R}^N$
(i.e., a stream of insertions and deletions of items from domain $[N]$)
and access to a function $\mathcal{P}:[N] \to [N']$, for $N'\leq N$,
that can be viewed as a partition of the domain into $N'$ parts.
In our intended application, the domain $[N]$ will be the grid $[\Delta]^d$,
and the partition $\mathcal{P}$ will be its partition into squares of a given level $i$;
observe that it is easy to implement $\mathcal{P}$
as a function that maps each grid point to its level-$i$ square.
Roughly speaking, the streaming algorithm in \Cref{lemma:streaming_light}
samples uniformly from the support set $\supp(x) = \{ i\in[N]: x_i\neq 0\}$,
but excluding indices that lie in the part of $\mathcal{P}$ that is heaviest,
i.e., has most nonzero indices, assuming it is sufficiently heavy.
In our intended application, this method samples uniformly from the input
$X\subset[\Delta]^d$ but excluding points that lie in the heaviest square,
i.e., uniformly from $X \setminus X(h_i)$.
(square with the largest number of input points).

\begin{lemma}[Sampling from Light Parts]
  \label{lemma:streaming_light}
  There exists a streaming algorithm,
  that given $0 < \epsilon, \delta, \sigma < 0.5$,
  integers $N, N', M \geq 1$,
  a mapping $\mathcal{P} : [N] \to [N']$,
  and a frequency vector $x \in [-M, M]^N$ that is presented as a stream of additive updates,
  uses space $O(\poly(\epsilon^{-1}\sigma^{-1}\log(\delta^{-1}MN)))$,
  and reports a sample $i^* \in [N] \cup \{\nil\}$ and a value $r^* \geq 0$.
Let $X := \{ i \in [N] \mid x_i \neq 0\}$ be the support of $x$,
  and let $j_{\max} := \arg\max_{j \in [N']} |\mathcal{P}^{-1}(j) \cap X|$
  be the heaviest $\mathcal{P}$ with respect to $X$.
  If $\Xheavy := \mathcal{P}^{-1}(j_{\max}) \cap X$ satisfies
  $|\Xheavy| \geq (0.5 + \sigma) |X|$,
then with probability at least $1 - \delta$,
  \begin{itemize}
    \item $r^* \in (1 \pm \epsilon) \cdot |\Xlight|$ where $\Xlight := X \setminus \Xheavy$, and
    \item unless $\Xlight$ is empty,
      $i^* \in \Xlight$
      and moreover for all $i \in \Xlight$,
      it holds that $\Pr[i^* = i] = \frac{1}{|\Xlight|}$ (provided that $|\Xlight| \neq 0$).
\end{itemize}
\end{lemma}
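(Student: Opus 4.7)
The plan is to implicitly isolate the (unknown) heavy part $j_{\max}$ by hashing the part-domain $[N']$ into two buckets, identify the ``heavy bucket'' from bucket-size estimates, and then combine $\ell_0$-samplers with rejection sampling to extract a uniform sample from $\Xlight$. In parallel, averaging the observed light-bucket sizes over independent hashes gives a $(1\pm\epsilon)$-approximation of $|\Xlight|$.

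At initialization I draw $R = \poly(\epsilon^{-1}\sigma^{-1}\log\delta^{-1})$ independent uniformly random hash functions $h^{(r)} : [N'] \to \{0,1\}$. For each $r$ and each bucket $b \in \{0,1\}$, I maintain, on the substream restricted by $h^{(r)}(\mathcal{P}(\cdot)) = b$, an $\ell_0$-estimator (\Cref{lem:l0estimator}) of $|A_b^{(r)}| := |\{i \in X : h^{(r)}(\mathcal{P}(i)) = b\}|$, together with an $\ell_0$-sampler (\Cref{lem:l0sampler}) for uniform samples from $A_b^{(r)}$. The restriction is implemented on-the-fly by routing each update $(i,v)$ to the sketch indexed by $(r, h^{(r)}(\mathcal{P}(i)))$, so the total space is $R \cdot \poly\log(\delta^{-1}MN) = \poly(\epsilon^{-1}\sigma^{-1}\log(\delta^{-1}MN))$.

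At query time, for each copy $r$ I take $b^{*(r)} := \argmax_b \hat{|A_b^{(r)}|}$; since the bucket containing $j_{\max}$ has size $\geq (0.5+\sigma)|X|$ while the other has size $\leq (0.5-\sigma)|X|$, a sufficiently small estimator accuracy forces $b^{*(r)} = h^{(r)}(j_{\max})$, so that $A_{1-b^{*(r)}}^{(r)} \subseteq \Xlight$. To estimate $|\Xlight|$ I use that, conditional on $b^{*(r)}$, each part $j \neq j_{\max}$ lies independently in the light bucket with probability $1/2$, so $2|A_{1-b^{*(r)}}^{(r)}|$ is an unbiased estimator of $|\Xlight|$ with variance $\sum_{j \neq j_{\max}} n_j^2 \leq |\Xlight|^2$ (where $n_j := |\mathcal{P}^{-1}(j) \cap \Xlight|$); a median-of-means across the $R$ copies then yields $\hat{L} \in (1\pm\epsilon)|\Xlight|$ with failure probability $\leq \delta/2$, which I output as $r^*$. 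For sampling, set $T := (1+10\epsilon)\hat{L}$, so that on the above good event $T \geq |\Xlight| \geq |A_{1-b^{*(r)}}^{(r)}|$. For each of $R' = O(\log\delta^{-1})$ further copies, I draw $i^{(r)}$ from $A_{1-b^{*(r)}}^{(r)}$ via its $\ell_0$-sampler and accept it with probability $|A_{1-b^{*(r)}}^{(r)}|/T$, then output the first accepted $i^{(r)}$ as $i^*$. The telescoping
\[
  \Pr\bigl[\text{copy }r\text{ outputs } i\bigr] = \Pr_{h^{(r)}}\!\bigl[h^{(r)}(\mathcal{P}(i)) \neq h^{(r)}(j_{\max})\bigr] \cdot \frac{1}{|A_{1-b^{*(r)}}^{(r)}|} \cdot \frac{|A_{1-b^{*(r)}}^{(r)}|}{T} = \frac{1}{2T}
\]
is independent of $i \in \Xlight$, so conditioned on at least one copy succeeding the output is uniform on $\Xlight$; each copy succeeds with probability $|\Xlight|/(2T) = \Theta(1)$, giving total failure probability $\leq \delta/2$.

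The main obstacle is estimating $|\Xlight|$ to \emph{relative} accuracy $\epsilon$, because standard sub-sampling from $X$ gives only additive accuracy of order $|X|$, which is useless when $|\Xlight| \ll |X|$. The hashing trick circumvents this because the light bucket $A_{1-b^{*}}^{(r)}$ is a pure sub-sample of $\Xlight$ with no contamination from $\Xheavy$, so its size is a low-variance proxy for $|\Xlight|/2$ amenable to median-of-means. A secondary technicality is that rejection sampling requires $T \geq |A_{1-b^{*(r)}}^{(r)}|$ almost surely; this is arranged by slightly inflating $\hat{L}$, which only costs a constant factor in per-copy success probability, recovered by the $R'$ parallel copies.
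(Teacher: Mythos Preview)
Your hashing-into-two-buckets idea matches the paper's, but the way you combine copies diverges and introduces a gap. The paper uses neither rejection sampling nor median-of-means; instead it takes the \emph{union} $\Dall := \bigcup_{t=1}^{s} D_t$ of the light buckets over $s = \Theta(\log(N\delta^{-1}))$ independent hashes. Each $i \in \Xlight$ lands in the light bucket with probability $1/2$ per hash, so after $s$ rounds $\Dall = \Xlight$ \emph{exactly} with probability at least $1-\delta$. All per-bucket $\ell_0$-sampler sketches share the same random seed, so they can be added (over $t$ and over the non-heavy bucket of each $t$) into a single linear sketch whose support is $\Dall$; querying it yields an exactly uniform sample from $\Xlight$, and the analogous sum of $\ell_0$-estimator sketches gives $r^*$.

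Your rejection step relies on the cancellation $\tfrac{1}{|A|}\cdot\tfrac{|A|}{T}=\tfrac{1}{T}$ to make the per-copy output probability $1/(2T)$ independent of $i$, and this needs the \emph{exact} value $|A_{1-b^{*}}^{(r)}|$ in the acceptance probability. In streaming you only have $\hat A \in (1\pm\epsilon')|A|$ from the $\ell_0$-estimator, so the per-copy probability is $\E_{h}\bigl[\mathbb{1}[i\in A]\cdot \hat A/(|A|\,T)\bigr]$, where $\hat A/|A|$ no longer cancels. Conditioned on $i\in A$, the distribution of $|A|$ (hence of $\hat A/|A|$) depends on $n_{\mathcal{P}(i)}$, so light elements in different parts receive different probabilities; the resulting law is only $(1\pm\epsilon')$-close to uniform rather than exactly $1/|\Xlight|$ as the lemma asserts. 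This slack would be harmless for the downstream importance-sampling application, but it does not prove the lemma as stated. The paper's union trick avoids any size-based normalization altogether, which is precisely why it delivers exact uniformity once $\Dall=\Xlight$.
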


In our application,
we will apply \Cref{lemma:streaming_light} in parallel for every level $i$,
with $N = \Delta^d$,
i.e., the items being inserted and deleted are points in $[\Delta]^d$,
and a mapping $\mathcal{P}$ defined by the level-$i$ squares (tree nodes),
i.e., for $x \in [\Delta]^d$ we define $\mathcal{P}(x)$ as the level-$i$ node that contains $x$.
We will set the failure probability to be $\delta = 1 / \poly(\Delta^{d})$
and a fixed $\sigma = 0.05$.
This way, conditioning on the success of \Cref{lemma:streaming_light},
we can compute $\tilde{q}_i$, $|\Xext_i|$ with error $(1 \pm \epsilon)$,
and sampled from $\Xext_i$ uniformly.

\paragraph{Concluding \Cref{lem:importance_sampling_algorithm}}
  In conclusion, our streaming algorithm initializes
  with sampling a randomly-shifted quadtree $T$ which defines a tree embedding, all in an implicit way.
  Then, assume $T$ is obtained and condition on the success of it, specifically \Cref{lem:tree_embeddings_prob} (with probability $0.99$),
  we use the streaming implementation of \Cref{alg:offline_sampling},
  as outlined above.
  The resultant $z^*$ and $p^*$ are the return value.
  The error bound on $z^*$ and $p^*$
  and the bound of $\lambda = O(\poly(d \log \Delta))$
  follow by \Cref{lem:tree_embeddings_prob} and \Cref{lem:prob},
  plus an additional error and failure probability
  introduced by streaming, which is bounded in the previous paragraphs.
  This finishes the proof.

\subsection{Sampling from The Light Parts (Proof of \Cref{lemma:streaming_light})}
\label{sec:proof_streaming_light}
\paragraph{An Offline Algorithm}
  Notice that $\{ \mathcal{P}^{-1}(y) \}_{y \in [N']}$ defines a partition of $[N]$.
  In our proof, we interpret $\mathcal{P} = \{ P_i \}_i$ as such a partition.
  Let $P_{\max} := \mathcal{P}^{-1}(y_{\max})$ be the part of $\mathcal{P}$ that contains the most from $X$, so $\Xheavy = P_{\max} \cap X$.
  We start with an offline algorithm, summarized in \Cref{alg:sampling_light}.
  \begin{algorithm}[ht]
    \caption{Sampling and estimating from the light part (offline)}
    \label{alg:sampling_light}
    \begin{algorithmic}[1]
      \State let $u \gets 2$,
      $s \gets \Theta(\log(N\delta^{-1}))$
      \State let $\mathcal{H} \gets \{ h_1, \ldots, h_s \}$ be a collection of independent random hash functions,
      where each $h \in \mathcal{H}$ ($h : \mathcal{P} \to [u]$) satisfies $\forall P \neq P'$, $\Pr[h(P) = h(P')] \leq 1 / u$ \label{line:hash}
      \For{$t \in [s]$}
        \State for $j \in [u]$, let $B_j \gets \left(\bigcup_{P \in \mathcal{P} : h_t(P) = j}P\right) \cap X$ \label{line:Bj}
        \State let $j^* \gets \arg\max_{j} |B_j|$ \label{line:jstar}
        \State let $D_t \gets X \setminus \bigcup_{P \in \mathcal{P} : h_t(P) = j^*}{P}$ \label{line:Di}
      \EndFor
      \State compute $\Dall \gets \bigcup_{t \in [s]} D_t$ \label{line:Dmax}
      \State return a uniform sample $i^* \in \Dall$, and report $r^* := |\Dall|$
      as the estimate for $|\Xlight|$
    \end{algorithmic}
  \end{algorithm}
  In the algorithm, we consider a set of
  $s = \Theta(\log(N \delta^{-1}))$ random hash functions $h_1, \ldots, h_s$
  that randomly map each part in $\mathcal{P}$ to one of
  $u = 2$ buckets (as in line~\ref{line:hash}).

  Then, consider some $h_t$ for $t \in [s]$.
  Let $B_j$ ($j \in [u]$) be the elements from all parts that are mapped by $h_t$ to the bucket $j$  (in line~\ref{line:Bj}).
  We find $j^*$ as the bucket that contains the most elements from $X$ (in line~\ref{line:jstar}).
  Since we assume $|\Xheavy| \geq (0.5 + \sigma) |X| > 0.5 |X|$,
  we know the bucket $h_t(P_{\max})$ contains \emph{more} than $0.5 |X|$ elements form $X$ (recalling that $P_{\max} = \mathcal{P}^{-1}(y_{\max})$ is the part that contains the most from $X$),
  and this implies $h_t(P_{\max})$ must be the bucket that contains the most elements from $X$.
  Hence,
  \begin{equation}
    \label{eqn:jstart_max}
    j^* = h_t(P_{\max}).
  \end{equation}
Next, we drop the elements that lie in the bucket $j^*$,
  and take the remaining elements, as $D_t$ (in line~\ref{line:Di}).
  While $D_t$ certainly does not contain any element from $\Xheavy$
  (by \eqref{eqn:jstart_max} and the definition of $D_t$'s),
  $D_t$ is only a subset of $\Xlight$.
Hence, we take the union of all $D_t$'s (over $t \in [s]$),
  denoted as $\Dall$ (in line~\ref{line:Dmax}), which equals $\Xlight$ with high probability.

  \paragraph{Analysis of $\Dall$}
  For every $i \in \Xlight$, every $t \in [s]$,
  \[
    \Pr[i \notin D_t]
    = \Pr[h_t(P_i) = h_t(P_{\max}) ]
    \leq \frac{1}{u} = \frac{1}{2},
  \]
  where $P_i \in \mathcal{P}$ is the part that $i$ belongs to.
  Therefore, by the independence of $h_t$'s, we know for every $i \in \Xlight$,
  \[
    \Pr[i \notin \Dall]
    = \Pr[\forall t \in [s], i \notin D_t]
    \leq \frac{1}{2^s} = \frac{\delta}{\poly(N)}.
  \]
  Taking a union bound over $i \in \Xlight$, we have
  \[
    \Pr[\exists i \in \Xlight, i \notin \Dall]
    \leq \frac{\delta}{\poly(N)} |\Xlight|
    \leq \delta.
  \]
  Hence, we conclude that
  \[
    \Pr[\Dall = \Xlight] \geq 1 - \delta.
  \]
  Conditioning on the event that $\Dall = \Xlight$,
  we conclude that $r^* = |\Dall| = |\Xlight|$,
  and that $\forall i \in \Xlight$,
  $\Pr[i^* = i] = \frac{1}{|\Dall|} = \frac{1}{|\Xlight|}$.

  \paragraph{Streaming Algorithm}
  It remains to give a streaming implementation for \Cref{alg:sampling_light}.
  Before the stream starts, we initialize several streaming data structures.
  We start with building the hash functions $\mathcal{H}$,
  and this can be implemented using space $\poly(\log N)$,
  by using hash families of limited independence.
  Next, we maintain for every $t \in [s]$, for every bucket $j \in [u]$,
  an $\ell_0$-sampler $\mathcal{L}^{(t)}_j$ (\Cref{lem:l0sampler}) with failure probability $O(\frac{\delta}{us})$,
  as well as an $\ell_0$-norm estimator $\mathcal{K}^{(t)}_j$
  (\Cref{lem:l0estimator}) with failure probability $O(\frac{\delta}{us})$
  and error guarantee $\epsilon \sigma \leq \min\{\epsilon, \sigma\}$,
  both on domain $[n]$.
  The setup of the failure probabilities immediately implies that
  with probability $1 - \delta$, all data structures succeed simultaneously,
  and we condition on their success in the following argument.
  Since we need to combine the linear sketches $\mathcal{L}^{(t)}_j$'s in later steps,
  for every $t \in [s]$ and $j \in [u]$, we use the same random seeds among all $\ell_0$-samplers $\{\mathcal{L}^{(t)}_j\}$'s,
  so that they can be ``combined'' by simply adding up.
  Also do the same for $\mathcal{K}^{(t)}_j$'s.
  Another detail is that, strictly speaking,
  we need $O(1)$ independent ``copies'' of every $\mathcal{K}$ and $\mathcal{L}$,
  since we need to query each of them $O(1)$ times.
  As this only enlarges the space by an $O(1)$ factor, we omit this detail for the sake of presentation.

  Whenever an update for element $i \in [n]$ is received,
  we update $\mathcal{L}^{(t)}_{j_i}$ and $\mathcal{K}^{(t)}_{j_i}$ for every $t \in [s]$,
  where $j_i := h_t(P_i)$, and $P_i \in \mathcal{P}$
  is the unique part that contains $i$.

  When the stream terminates, we proceed to generate the sample $i^* \in \Xlight$ and the estimate $r^*$ for $\Xlight$.
  For $t \in [s]$, $j \in [u]$, query $\mathcal{K}^{(t)}_j$
  to obtain an estimator for $|B_j|$ (line~\ref{line:Bj})
  within $(1 \pm \epsilon\sigma)$ factor.
  Use these estimations to find $j^*$ (line~\ref{line:jstar}).
  Note that this $j^*$ is the same as computing using exact $|B_j|$ values.
  To see this, the key observation is that, $|\Xheavy| \geq ( 0.5 + \sigma) |X|$,
  while for every $P \in \mathcal{P} \setminus P_{\max}$ we have $|P| \leq (0.5 - \sigma) |X|$.
  Hence, to precisely find $j^*$, it suffices to distinguish between
  subsets $P$, $P'$ such that
  $|P| \geq (0.5 + \sigma) |X|$ and $|P'| \leq (0.5 - \sigma)|X|$.
  Even with a $(1 \pm \epsilon \sigma)$ error (which is the error of our $\mathcal{K}$'s),
  this gap is still $\frac{0.5 + \sigma}{ 0.5 - \sigma} \cdot \frac{1 + \epsilon \sigma}{ 1 - \epsilon \sigma} > 1$ which is large enough.

  Next, compute $\mathcal{L}^{(t)} := \sum_{j \in [u] \setminus \{j^*\}}\mathcal{L}^{(t)}_j$
  as the $\ell_0$-sampler that corresponds to $D_t$ (line~\ref{line:Di}),
  and obtain $\mathcal{K}^{(t)} := \sum_{j \in [u] \setminus \{j^*\}}\mathcal{K}^{(t)}_j$ similarly.
  We can do this since we use the same random seeds among $\mathcal{L}^{(t)}_j$'s (and the same has been done to $\mathcal{K}$).
  We further compute $\mathcal{L} := \sum_{t \in [s]} \mathcal{L}^{(t)}$
  whose support corresponds to $\Dall$.
  Define $\mathcal{K} := \sum_{t \in [s]} \mathcal{K}^{(t)} $ similarly.
  The final return values $i^*$ and $r^*$ are given by querying $\mathcal{L}$ and $\mathcal{K}$.
  Note that on the success of the $\ell_0$-sampler $\mathcal{L}$,
  the probability for $i^* = i$ for each $i \in W$ is exactly $\frac{1}{|W|}$ (\Cref{lem:l0sampler}).
  However, the $r^*$ deviates from $|W|$ by a multiplicative $(1 \pm \epsilon)$ factor.

  In conclusion, the analysis of \Cref{alg:sampling_light} still goes through by using the estimated values as in the above procedure,
  except that one needs to rescale $\epsilon$ and $\delta$ by a constant factor,
  to compensate the error and failure probability introduced by the streaming data structures.
  This finishes the proof of \Cref{lemma:streaming_light}.

    \subsection*{Acknowledgments}
    We thank Christian Sohler for pointing us to the dimension reduction result
    in~\cite{DBLP:conf/wads/LammersenSS09, lammersen2011approximation}.
    We also thank an anonymous reviewer for pointing out how to simplify
    our proof of this result (\Cref{thm:maxcut_jl}).

    \bibliography{ref.bib}
    \bibliographystyle{alphaurl}
    \begin{appendices}
\section{Lower Bounds Based on INDEX}
\label{sec:lb_light}

\begin{definition}[INDEX Problem]
    Alice is given a message $x \in \{0, 1\}^n$,
    and Bob is given an index $i \in [n]$.
    Alice can send Bob exactly one message $M$, and Bob needs to use his input $i$ and this message $M$, to compute $x_i \in \{0, 1\}$.
\end{definition}
It is well known that the INDEX problem requires $\Omega(n)$ combination to succeed with constant probability, i.e., $M = \Omega(n)$ (see e.g.,~\cite{KushilevitzNisan97,DBLP:journals/cc/KremerNR99,DBLP:journals/toc/JayramKS08}).

\begin{claim}
    \label{claim:lb_sampling_light}
    For every integer $\Delta \geq 1$, given access to a quadtree $T$ on $[\Delta]$,
    any streaming algorithm that tests with constant success probability 
    whether $X(h_{i^*}) \setminus X(h_{i^* + 1}) = \emptyset$ for every $X \subseteq [\Delta]$ and $i^*$ presented as an insertion-only point stream
    must use space $\Omega(\sqrt{\Delta})$,
    where $h_i$ is defined as in \Cref{sec:streaming} which is the level-$i$ node of $T$ that contains the most from $X$,
    and it is promised that both $|X(h_{i^*})|$ and $|X(h_{i^*+1})|$ is larger than $0.5 |X|$.
\end{claim}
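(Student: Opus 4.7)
I will prove the $\Omega(\sqrt{\Delta})$ lower bound by a one-way communication reduction from \textsc{INDEX} on $n := \Theta(\sqrt{\Delta})$ bits, which is known to require $\Omega(n)$ bits of communication. Since the quadtree $T$ on $[\Delta]$ is data-oblivious and provided to both parties, the plan is to partition $[\Delta]$ into $n$ consecutive blocks of size $b := \Delta/n$ whose boundaries are aligned with $T$ (take $n$ and $b$ to be powers of $2$ with $b \ge 4n$, e.g.\ $\Delta$ a large power of $2$, $n = \sqrt{\Delta}/4$, $b = 4\sqrt{\Delta}$). With this choice, every block $B_j$ is exactly a node of $T$, say at some fixed level $i^*$, and its two halves are its two children at level $i^*+1$.

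\textbf{Reduction.} Given her input $x \in \{0,1\}^n$, Alice inserts into the stream, for each $j$ with $x_j = 1$, a single point in the right half of block $B_j$ (say its leftmost position); she then hands off the memory state of the streaming algorithm to Bob. Given his index $i \in [n]$, Bob continues the same stream by inserting $m := n+1$ distinct points in the left half of $B_i$; this is possible because the left half holds $b/2 \ge 2n > m$ grid positions, all distinct from Alice's points. He finally feeds the query $i^*$ to the tester.

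\textbf{Verification of the promise and correctness.} The stream produces a set $X$ with $|X| \le n + m = 2n+1$ and $|B_i \cap X| \ge m > |X|/2$, while every other level-$i^*$ block contains at most one of Alice's points and none of Bob's. Hence $h_{i^*} = B_i$ and the promise $|X(h_{i^*})| > 0.5|X|$ holds. Similarly, the left half of $B_i$ contains exactly $m > |X|/2$ points (Alice inserts only into right halves), so $h_{i^*+1}$ is precisely that left half and the second promise holds too. Therefore $X(h_{i^*}) \setminus X(h_{i^*+1})$ equals the set of points of $X$ in the right half of $B_i$, which is nonempty if and only if $x_i = 1$. A streaming tester using $s$ bits of space thus yields an \textsc{INDEX} protocol with $s$ bits of communication and the same constant success probability, and the $\Omega(n)$ communication lower bound gives $s = \Omega(\sqrt{\Delta})$.

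\textbf{Main obstacle.} The only subtlety is matching the geometric/tree constraints with the INDEX counting argument: the blocks must coincide with nodes of the fixed quadtree $T$, and $b/2$ must be wide enough to accommodate Bob's $m > n$ distinct ``dummy'' points without colliding with Alice's encoding, while still allowing $n \cdot b \le \Delta$. This is resolved by taking $b = \Theta(\sqrt{\Delta})$ and $n = \Theta(\sqrt{\Delta})$ as powers of $2$ with $b \ge 4n$, so that $n \cdot b \le \Delta$ and the left half of each block comfortably holds $n+1$ distinct points disjoint from the right half used by Alice.
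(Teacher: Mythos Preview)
Your proof is correct and follows essentially the same approach as the paper: both reduce from \textsc{INDEX} on $\Theta(\sqrt{\Delta})$ bits by letting Alice scatter at most one point per level-$i^*$ block according to her bitstring, and letting Bob flood one half of his target block with $\Theta(\sqrt{\Delta})$ points so that $h_{i^*}$ and $h_{i^{*}+1}$ are forced to be that block and its Bob-filled child, making the emptiness of $X(h_{i^*})\setminus X(h_{i^{*}+1})$ equivalent to the queried bit. The only differences are cosmetic (you swap the roles of left/right halves, use slightly different constants, and are more careful to align block boundaries with dyadic tree nodes, whereas the paper is informal about the side-length $10\sqrt{\Delta}$ not being a power of $2$).
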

\begin{proof}
    Pick the level $i^*$ in $T$ such that the squares (i.e., intervals) are of side-length $m := 10\sqrt{\Delta}$.
    We reduce to INDEX with $n := \Delta / m = \frac{1}{10}\sqrt{\Delta}$,
    corresponding to the level-$i^*$ nodes. (Note that $i^*$ is public knowledge between Alice and Bob).
    Suppose Alice receives $x \in \{0, 1\}^n$.
    For $j \in [n]$, if $x_j = 1$, insert a point at coordinate $(j - 1) \cdot m + 1$,
    which is the first coordinate in the $j$-th interval at level $i^*$,
    and do nothing if $x_j = 0$.
    Feed this input to the algorithm, and send the internal state of the algorithm to Bob.

    When Bob receives $i \in [n]$,
    Bob inserts one point to each coordinate in
    the right/larger sub-interval of the $i$-th level-$i^*$ interval,
    namely, $((i - 1)\cdot m + \frac{m}{2}, i\cdot m - 1]$.
    Clearly, $h_{i^*}$ is this $i$-th interval at level $i^*$,
    $h_{i^* + 1}$ is the right sub-interval of $h_{i^*}$,
    and both $|X(h_{i^*})|$ and $|X(h_{i^* + 1})|$ contain more than $0.5 |X|$
    points.
    Observe that $X(h_{i^*}) \setminus X(h_{i^* + 1}) = \emptyset$ if and only if $x_i = 0$. This finishes the proof.
\end{proof}

\begin{claim}
    \label{claim:lb_exact}
    For every integer $\Delta \geq 1$,
    any algorithm that with constant success probability
    computes $\MaxCut(X)$ exactly for every $X \subseteq [\Delta]$ presented as an insertion-only point stream
    must use space $\Omega(\poly(\Delta))$.
\end{claim}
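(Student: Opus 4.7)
The plan is to prove this by a reduction from the INDEX one-way communication problem, which is known to require $\Omega(n)$ bits of one-way communication (even with shared randomness and constant success probability). If we can encode INDEX on $n=\poly(\Delta)$ bits into a stream so that the exact value of $\MaxCut$ determines the answer, the space lower bound follows immediately from the standard reduction of streaming to one-way communication complexity: Alice plays the role of the first part of the stream, sends the algorithm's memory to Bob, and Bob plays the rest of the stream and reads off $x_i$.

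Concretely, set $\Delta$ large enough, pick $n=\Theta(\poly(\Delta))$, and designate $n$ ``bit positions'' $p_1<\dots<p_n$ inside a small portion of $[\Delta]$ (e.g., $p_j=2j$). Alice, on input $x\in\{0,1\}^n$, streams the set $S_A:=\{p_j : x_j=1\}$. Bob, on query $i\in[n]$, streams a query set $T_i\subseteq[\Delta]\setminus\{p_1,\ldots,p_n\}$ designed so that $\MaxCut(S_A\cup T_i)$ encodes $x_i$ in a way Bob can decode. A natural candidate for $T_i$ is a ``probe + anchor'' structure: a local probe point placed adjacent to $p_i$ (to create a distinguished short-distance pair exactly when $x_i=1$), together with a large reference cluster placed far to the right of all bit positions (to pin down the topology of the optimal cut so that only the probe/$p_i$ interaction depends on $x_i$).

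The main step of the proof is then to show that $\MaxCut(S_A\cup T_i)$ genuinely depends on $x_i$, not merely on aggregate statistics of $S_A$. This is the delicate part of the argument, because with a too-simple $T_i$ (for instance, a single heavy far cluster) the optimal cut is the ``everything vs.\ cluster'' cut and $\MaxCut$ reduces to a linear expression in $(|S_A|,\sum_{a\in S_A}a)$, which is insufficient to recover an individual bit. One therefore needs $T_i$ to be calibrated so that the identity of the optimal cut itself changes with $x_i$ (e.g., so that the probe point is split from $p_i$ whenever $x_i=1$, yielding a detectable additive shift in the MaxCut value). Once this is established, a decoder $d_i$ recovers $x_i$ from the exact MaxCut value, completing the INDEX reduction and giving space $\Omega(n)=\Omega(\poly(\Delta))$.

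The hardest part of the argument, and where care is needed, is exactly the construction of $T_i$: one must make sure that for every possible $x\in\{0,1\}^n$, the optimal cut of $S_A\cup T_i$ lies in a family whose value reveals $x_i$ rather than just $|S_A|$ and $\sum_a a$. Assuming such a gadget, the lower bound follows from the $\Omega(n)$ one-way communication lower bound for INDEX in the standard way.
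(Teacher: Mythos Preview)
Your reduction-to-INDEX framework is correct, and you have put your finger on exactly the right difficulty: with any ``one big far cluster'' gadget the optimal cut is simply \emph{cluster vs.\ everything else}, so $\MaxCut(S_A\cup T_i)$ becomes a linear functional of $(|S_A|,\sum_{a\in S_A}a)$ and carries no information about the individual bit $x_i$. The problem is that you stop precisely at this obstacle: you do not construct a gadget $T_i$ for which the optimal cut's topology flips depending on $x_i$, you only assert that ``one needs'' such a thing and then write ``assuming such a gadget.'' That is the entire content of the claim; on the line, forcing the optimal cut to isolate a single prescribed position against an adversarial configuration of the remaining bits is not at all obvious, and it is not clear your probe-plus-anchor idea can be made to work.

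The paper sidesteps this obstacle rather than solving it. Bob does \emph{not} try to read off $x_i$ from a single query. Instead, he saves the memory state Alice sends, and for each $j\in[n]$ (with $n=\Delta^{0.1}$) he inserts a dense cluster $P_j$ near position $(j-\tfrac12)m$, queries $\MaxCut$, and then restores the saved state. Each query yields (after rounding away a lower-order additive error) the value $f(j)=\sum_{i\in X}|i-j-\tfrac12|$, which is exactly the ``linear'' information you flagged as useless for a single $j$. The point is that knowing $f(j)$ for \emph{all} $j\in[n]$ is enough: $f$ is piecewise linear with breakpoints precisely at the elements of $X$, so Bob recovers all of $X$ and in particular $x_i$. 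The required union bound over $n$ queries is handled by first amplifying the algorithm's success probability to $1-1/\Delta^c$, which costs only a $\polylog(\Delta)$ factor in space. So the missing idea in your outline is not a cleverer single-shot gadget but the use of many queries and a reconstruction argument from the resulting family of linear measurements.
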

\begin{proof}
In our proof, we assume an algorithm $\mathcal{A}$ returns $\eta \geq 0$,
    such that for every $X \subseteq [\Delta]$,
    \[
        \Pr[\eta = \MaxCut(X)] \geq 1 - 1 / \Delta^c,
    \]
    for some sufficiently large $c \geq 1$.
    We show such $\mathcal{A}$ must use space $\Omega(\poly(\Delta))$.
    Note that the assumption about the $1 / \Delta^c$ probability is without loss of generality.

    We reduce to INDEX with $n := \Delta^{0.1}$.
    Let $m := \frac{\Delta}{n} = \Delta^{0.9} = n^9$.
    Suppose Alice receives $x \in \{0, 1\}^n$.
    For $j \in [n]$, if $x_j = 1$, insert a point at coordinate $ (j - 1) m $.
    \footnote{Here we may insert a point at $0$ which is not in $[\Delta]$.
    This may be resolved by e.g., enlarging $\Delta$ by $1$ and shifting the coordinates, but we omit this detail for the sake of presentation.
    }
    If $x_j = 0$ do nothing.
    Alice feeds this input to $\mathcal{A}$, and sends the internal state of $\mathcal{A}$ to Bob.

    Now, suppose Bob receives $i \in [n]$.
    Bob resumes algorithm $\mathcal{A}$, and use $\mathcal{A}$ to do the following:
    for every $j \in [n]$, insert to the stream $P_j := \{ (j - 1)m + \frac{m}{2} + t \mid 1 \leq t \leq n^{3} \}$ and query the \MaxCut value.
    Restore to the initial received states of $\mathcal{A}$ after every iteration of $j$.
    This may be done, by saving the received state, insert the points $P_j$ for one $j$,
    query, and fall back to the saved state.
    Eventually, since $\mathcal{A}$ succeeds with probability $1 - 1 / \Delta^c$,
    by a union bound, with constant probability,
    all the queries are answered correctly simultaneously.
    We condition on this constant-probability event.

    Let $X := \{ i \in [n] \mid x_i \neq 0 \}$ be the support of Alice's input $x$,
    and let $X' := \{ (i - 1)m \mid i\in X \}$ be the set of points inserted by Alice. 
    Let $\widehat{X}_j := X' \cup P_j$.
    Notice that now we have the exact value of $\MaxCut(\widehat{X}_j)$ for every $j \in [n]$.
    It remains to show this information suffices to deduce the $x_i$ value, which the INDEX problem asks for.
    To this end, we need to show the following lemma, that gives the (rough)
    value of $\MaxCut(\widehat{X}_j)$,
    which is basically $\sum_{i \in X} |i - j - 0.5|$ (up to scaling and a neglectable additive error).
    \begin{lemma}
        \label{lemma:lb_cut_value}
        For every $j \in [n]$, $\MaxCut(\widehat{X}_j) \in n^{12} \sum_{i \in X} |i - j - 0.5| \pm O(n^4)$.
    \end{lemma}
    \begin{proof}
        Fix some $j$. 
        Note that $P_j \cap X' = \emptyset$.
        Observe that the cut value of the cut $(X', P_j)$ is
        \begin{align*}
            \sum_{x \in X', y \in P_j} \dist(x, y)
            &= \sum_{i \in X}\sum_{t \in [n^3]} \left|(i - 1)m  - (j - 1)m - \frac{m}{2} - t\right|  \\
            &= \sum_{i \in X}\sum_{t \in [n^3]} |(i - j - 0.5)n^9 - t| \\
            &\in \sum_{i \in X}\sum_{t \in [n^3]} |i - j - 0.5| \cdot n^9 \pm t  \\
            &\in \sum_{i \in X} |i - j - 0.5|\cdot n^{12} \pm O(n^{4}).
\end{align*}
        It suffices to show the cut $(P_j, X')$ achieves $\MaxCut(\widehat{X}_j)$.
        Define the notation $\cut(W, Z) := \sum_{x \in W, y \in Z}{\dist(x, y)}$.
        Consider a cut $(S, T)$ of $\widehat{X}_j$ such that $S \neq P_j$ and $T \neq X'$,
        and we show $\cut(S, T) < \cut(X', P_j)$.
        A useful fact which follows from the definition is that $\dist(X', P_j) \geq \frac{m}{2}$.

        \paragraph{Easy Case}
        First, we consider the (easy) case such that $S \subseteq P_j $ or $T \subseteq P_j$.
        In this case, there is some $\emptyset \neq W \subset P_j$ such that $\cut(S, T) = \cut(P_j \setminus W, X' \cup W)$. Then,
        \begin{align*}
            \cut(S, T) 
            &= \cut(P_j \setminus W, X' \cup W) 
            =  \cut(P_j \setminus W, X')  + \cut(P_j, W) \\
            &<  \cut(P_j \setminus W, X') + \frac{|P_j|^2}{4} \cdot n^3 \\
            &= \cut(P_j, X') - \cut(W, X) + \frac{n^9}{4} \\
            &\leq \cut(P_j, X') - \frac{m}{2} + \frac{n^9}{4} \\
            &< \cut(P_j, X').
        \end{align*}

        \paragraph{General Case}
        Now, we handle the remaining case where neither $S$ nor $T$ is a subset of $P_j$.
        Then $|S \cap X'| > 0$ and $|T \cap X'| > 0$.
        Combining these facts and assumptions, we have
        \begin{align*}
            &\cut(P_j, X') - \cut(S, T) \\
            =&  \cut(P_j, X') - \cut(S \cap P_j, T \cap X') - \cut(S \cap X', T \cap P_j) - \cut(S \cap X', T \cap X') - \cut(S \cap P_j, T \cap P_j) \\
            >& \cut(S\cap P_j, S \cap X') + \cut(T \cap P_j, T \cap X')
            - \frac{|X'|^2}{4} \cdot \Delta  - |P_j|^2 \cdot n^3 \\
            \geq&\ \frac{m}{2}\cdot (|S \cap X'| \cdot |S \cap P_j| + |T \cap X'| \cdot |T \cap P_j|) - \frac{n^{12}}{4} - n^9 \\
            \geq&\ \frac{m}{2} \cdot (|S \cap P_j| + |T \cap P_j|) - \frac{n^{12}}{4} - n^9 \\
            =&\ \frac{n^{12}}{4} - n^9 > 0.
        \end{align*}
        This finishes the proof of \Cref{lemma:lb_cut_value}.
    \end{proof}
    Observe that, the function $f : [n] \to \RR_+$ such that $f(j) := 2\sum_{i \in X}|i - j-0.5|$ is integer-valued,
    so $2 n^{12}\sum_{i \in X}|i - j - 0.5|$ must be a multiple of $n^{12}$.
    Therefore, by \Cref{lemma:lb_cut_value},
    we round the value of $2 \cdot \MaxCut(\widehat{X}_j)$ to the nearest multiple of $n^{12}$ (so that the additive error $O(n^4)$ in \Cref{lemma:lb_cut_value} is ignored),
    and we obtain the exact value of $2 n^{12} \sum_{i \in X}|i - j -0.5|$.
    In other words, we know the value of $f(j)$ for all $j \in [n]$.
    Finally, because of the piece-wise linear structure of $f$,
    one can make use of all the $f(j)$ values to recover all of $X$.
    This finishes the proof of \Cref{claim:lb_exact}.
\end{proof}
       \section{Dimension Reduction for \MaxCut}
\label{sec:jl}

For completeness, we prove below a dimension reduction for \MaxCut,
similarly to the one given in~\cite{DBLP:conf/wads/LammersenSS09},
with proof details appearing in~\cite{lammersen2011approximation}. 

\begin{theorem}
  \label{thm:maxcut_jl}
  Let $X \subset \mathbb R^d$ be a finite set, $0 < \epsilon, \delta < 1$
  and $\pi : \mathbb R^d \to \mathbb R^{d'}$
  be a JL Transform as in \Cref{def:jl} with target dimension $d' = O\left( \varepsilon^{-2} {\log(\epsilon^{-1}\delta^{-1})} \right) $.
  Then with probability $1 - \delta$,
  \begin{equation}
    \label{eqn:maxcut_presesrved}
    \MaxCut(\pi(X)) \in (1 \pm \epsilon)\cdot \MaxCut(X).
  \end{equation}
\end{theorem}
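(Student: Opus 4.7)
The plan is to reduce MaxCut preservation to controlling a single aggregate distortion
\[
  E := \sum_{\{x,y\}\subseteq X} \big|\,\|\pi(x)-\pi(y)\| - \|x-y\|\,\big|,
\]
which uniformly dominates the cut-value error. The crucial observation is that for every $S\subseteq X$,
\[
  |\cut_{\pi(X)}(S) - \cut_X(S)|
  = \Big|\sum_{\{x,y\}\text{ crosses }S} (\|\pi(x)-\pi(y)\|-\|x-y\|)\Big|
  \le E,
\]
so taking $\max_S$ on both sides gives $|\MaxCut(\pi(X)) - \MaxCut(X)| \le E$. Once we prove $E \le \varepsilon \MaxCut(X)$ with probability $1-\delta$, \eqref{eqn:maxcut_presesrved} follows. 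The strength of this reduction is that it avoids any union bound over the exponentially many cuts, which is exactly what allows $d'$ to be independent of $|X|$.

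To convert the bound $E \le \varepsilon \MaxCut(X)$ into one involving only the total pairwise-distance sum $D := \sum_{\{x,y\}} \|x-y\|$, I would use the standard fact that a uniformly random bipartition of $X$ separates each unordered pair with probability $1/2$, so its expected cut value is $D/2$, giving $\MaxCut(X) \ge D/2$. Therefore it suffices to prove $E \le O(\varepsilon) D$ with probability $1-\delta$.

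I would split that into a first-moment bound and a concentration bound. For $\E[E]$: the standard JL guarantee gives $\E\|\pi(x)-\pi(y)\|^2 = \|x-y\|^2$ with variance $O(\|x-y\|^4/d')$, which by Jensen/Cauchy--Schwarz yields $\E\big|\,\|\pi(x)-\pi(y)\|-\|x-y\|\,\big| = O(\|x-y\|/\sqrt{d'})$; summing over pairs gives $\E[E] = O(D/\sqrt{d'}) \le O(\varepsilon D)$ once $d' = \Omega(\varepsilon^{-2})$. For concentration, I would model $\pi$ as an appropriately scaled Gaussian matrix and note that each term $|\|\pi(x-y)\|-\|x-y\||$ is $\|x-y\|/\sqrt{d'}$-Lipschitz in $\pi$ in Frobenius norm; hence $E$ itself is $D/\sqrt{d'}$-Lipschitz as a function of the underlying Gaussian. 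Borell's inequality then gives $\Pr[E > \E[E] + t] \le 2\exp(-t^2 d'/(2D^2))$; choosing $t = \varepsilon D$ and $d' = \Theta(\varepsilon^{-2}\log(\varepsilon^{-1}\delta^{-1}))$ drives the failure probability below $\delta$. Combining yields $E \le O(\varepsilon) D \le O(\varepsilon)\MaxCut(X)$ with probability $1-\delta$, and rescaling $\varepsilon$ completes the proof.

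The main obstacle I expect is matching this sub-Gaussian Lipschitz concentration step to the specific JL distribution in \Cref{def:jl}. For a Gaussian JL matrix, Borell's inequality applies verbatim; for sign-based or sparse JL constructions one would need to verify an analogous Lipschitz or Hanson--Wright-type tail bound on $E$. Without exponential concentration, only a Chebyshev bound would be available, inflating the $\log(\delta^{-1})$ dependence in $d'$ to $\poly(\delta^{-1})$, so this concentration step is where essentially all of the quantitative content of the theorem lives.
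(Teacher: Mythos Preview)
Your reduction is the same as the paper's: both argue that $|\MaxCut(\pi(X))-\MaxCut(X)|\le E$ where $E=\sum_{x,y}|\dist(\pi(x),\pi(y))-\dist(x,y)|$, and both use $\MaxCut(X)\ge \tfrac12\sum_{x,y}\dist(x,y)$ to reduce to showing $E\le O(\varepsilon)\sum_{x,y}\dist(x,y)$ with probability $1-\delta$. The divergence is in how that last bound is established. You control $\E[E]=O(D/\sqrt{d'})$ and then invoke Gaussian Lipschitz concentration (Borell) on $E$ as a $D/\sqrt{d'}$-Lipschitz function of the random matrix; this is correct for a Gaussian JL map and indeed yields the stated $d'$. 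The paper instead exploits the second property in \Cref{def:jl}, namely $\E_\pi[\max\{|\dist(\pi(x),\pi(y))/\dist(x,y)-1|-\varepsilon,0\}]\le e^{-Cd'\varepsilon^2}$, sums it over all pairs to get $\E[(E-\varepsilon D)_+]\le e^{-Cd'\varepsilon^2}D\le \varepsilon\delta D$, and finishes with a single application of Markov's inequality. So the paper's route is more elementary (no Lipschitz concentration, no restriction to Gaussian matrices beyond what is already baked into \Cref{def:jl}), while your route is self-contained in that it does not need the tailored expectation bound \eqref{eqn:jl_exp}; the concern you flag about matching Borell to the specific JL family is exactly what the paper sidesteps by assuming \eqref{eqn:jl_exp} up front.
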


Our bound is not directly comparable with that in~\cite{DBLP:conf/wads/LammersenSS09,lammersen2011approximation},
since we improve the dependence on $\delta$ from $O(\delta^{-2})$ to $O(\log(\delta^{-1}))$,
but we also introduced an additional $\log(\epsilon^{-1})$ factor.
Our argument is simpler, achieved by making use of a certain formulation
of the Johnson-Lindenstrauss (JL) transform~\cite{JL84},
that differs from the one used in~\cite{DBLP:conf/wads/LammersenSS09,lammersen2011approximation}. 

The JL transform formulation that we use (see \Cref{def:jl})
is similar to the one previously used in~\cite{MMR19},
to obtain dimension reduction results for clustering problems.
Its second property (the expectation bound),
may not hold for all constructions of the JL transform
(and was not used in~\cite{DBLP:conf/wads/LammersenSS09,lammersen2011approximation}),
but it can be realized by a random matrix with independent sub-Gaussian entries.

\begin{definition}[JL Transform~\cite{JL84,MMR19}]
  \label{def:jl}
  For every integer $d, d' \geq 1$,
  there exists a randomized mapping $\pi : \mathbb{R}^d \to \mathbb{R}^{d'}$
  such that for all $x \neq y \in \mathbb{R}^d$,
  \begin{align}
    \Pr_\pi & \Big[\dist(\pi(x),\pi(y)) \not\in (1 \pm \epsilon) \cdot \dist(x, y) \Big]
              \le e^{-C d' \varepsilon^2} \label{eqn:jl_pr}
    \\
    \E_\pi & \left[\max\left\{ \left|\frac{\dist(\pi(x), \pi(y))}{\dist(x, y)} - 1\right| - \varepsilon, 0\right\} \right]
             \le e^{-C d' \varepsilon^2}, \label{eqn:jl_exp}
  \end{align}
  for some universal constant $C > 0$.
\end{definition}

Our version is similar but not identical to that in~\cite{MMR19},
because we introduce an absolute value (to bound the error from both sides),
but the proof is similar.

\begin{proof}[Proof of \Cref{thm:maxcut_jl}]
  Observe that it suffices to show with probability $1 - \delta$,
  \begin{equation}
    \sum_{x,y \in X} |\dist(\pi(x),\pi(y)) - \dist(x, y)| \leq  O(\varepsilon)\sum_{x,y \in X} \dist(x, y).
  \end{equation}
  Let $P := \left\{ (x ,y) \in X : |\dist(\pi(x), \pi(y)) - \dist(x, y)| > \epsilon \dist(x, y)   \right\}$
  be the set of ``bad'' point pairs whose distances are distorted by more than $\epsilon$ error.
  Since by definition we have (with probability $1$)
  \begin{align*}
    \sum_{(x, y) \in X\times X \setminus P} |\dist(\pi(x), \pi(y)) - \dist(x, y)| \leq \epsilon \cdot \sum_{x, y \in X}\dist(x, y),
  \end{align*}
  hence, it remains to show the following holds with probability $ 1- \delta$
  \begin{equation}
    \label{eqn:bad_pairs}
    \sum_{(x, y) \in P} |\dist(\pi(x), \pi(y)) - \dist(x, y)| \leq \epsilon \sum_{x, y \in X} \dist(x, y).
  \end{equation}
  By the guarantee of \Cref{def:jl}, we have
  \begin{align*}
    \quad &\E[\sum_{(x, y) \in P} |\dist(\pi(x), \pi(y)) - \dist(x, y)| - \epsilon \dist(x, y)] \\
    = &\E[\sum_{x, y \in P}\max\{|\dist(\pi(x), \pi(y)) - \dist(x, y)| -\epsilon \dist(x, y), 0\}] \\
    \leq &\E[\sum_{x, y \in X}\max\{|\dist(\pi(x), \pi(y)) - \dist(x, y)| -\epsilon \dist(x, y), 0\}] \\
    \leq &e^{-Cd'\epsilon^2} \sum_{x, y \in X} \dist(x, y)
    \leq \epsilon \delta \sum_{x, y \in X} \dist(x, y).
  \end{align*}
Therefore, by Markov's inequality, we conclude that \eqref{eqn:bad_pairs} holds with probability $1 - \delta$.
  This finishes the proof.
\end{proof}

\Cref{thm:maxcut_jl} implies the corollary below about a streaming algorithm
that reports an encoding of a near-optimal cut (and not just its value).
The most natural way to report a cut of $X$
is to somehow represent of a $2$-partition of $X$,
but this is not possible because that contains $X$ itself,
which requires $\Omega(n)$ bits to store.
Instead, we let the algorithm report
a function $f : \mathbb{R}^d \to \{0, 1\}$ (using some encoding),
and then $f$ implicitly defines the cut $(X\cap f^{-1}(0) , X\cap f^{-1}(1))$.
In other words, the algorithm essentially reports an ``oracle''
that does not know $X$,
but can determine, for each input point $x\in X$, its side in the cut.
This formulation was suggested by~\cite{DBLP:conf/stoc/FrahlingS05},
and in fact we rely on their solution and combine it with our dimension reduction.

\begin{corollary}[Cut Oracle]
  \label{cor:cut_oracle}
  There is a randomized streaming algorithm that,
  given $0 < \epsilon < 1/2$, integers $\Delta, d \geq 1$,
  and an input dataset $X \subseteq [\Delta]^d $ presented as a dynamic stream,
  the algorithm uses space $\exp(\poly(\epsilon^{-1}))\poly(d \log \Delta)$,
  and reports (an encoding of) a mapping $f : \mathbb{R}^d \to \{0, 1\}$,
  such that with constant probability (at least $2 / 3$),
  $\cut_X(X\cap f^{-1}(0)) \geq (1 - \epsilon) \cdot \MaxCut(X)$.
\end{corollary}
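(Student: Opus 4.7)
The plan is to compose the dimension reduction of \Cref{thm:maxcut_jl} with the $(1+\epsilon)$-approximation streaming algorithm of Frahling and Sohler~\cite{DBLP:conf/stoc/FrahlingS05}, which uses space $\exp(d)\polylog\Delta$ and, in its stronger form, produces a cut oracle that, on any query point $x \in \mathbb{R}^d$, reports its side in an approximately optimal cut. Before the stream starts, I would sample and store a JL transform $\pi : \mathbb{R}^d \to \mathbb{R}^{d'}$ with target dimension $d' = O(\epsilon^{-2}\log\epsilon^{-1})$ using $\poly(d,\epsilon^{-1})$ random bits. By \Cref{thm:maxcut_jl}, with constant probability we have $\MaxCut(\pi(X)) \in (1 \pm \epsilon)\MaxCut(X)$, and the analysis conditions on this event.

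Next, since $\pi$ is linear, I can translate each insertion/deletion of $x \in [\Delta]^d$ into a corresponding insertion/deletion of $\pi(x) \in \mathbb{R}^{d'}$ on-the-fly. To match the discrete-grid input format required by~\cite{DBLP:conf/stoc/FrahlingS05}, I would snap each $\pi(x)$ to a grid $[\Delta']^{d'}$ with $\Delta' = \poly(d\Delta\epsilon^{-1})$, where the grid resolution is chosen so that the additive rounding error on every pairwise distance is a low-order term relative to $\MaxCut(\pi(X))$ (a standard calculation using an upper bound on $\|\pi\|_{\mathrm{op}}$ and on the number of distinct pairs in $X$). I then feed this discretized stream to the Frahling-Sohler algorithm and, at the end of the stream, extract its cut oracle $g : \mathbb{R}^{d'} \to \{0,1\}$. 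The oracle returned by our algorithm is $f := g \circ \pi$, which can be evaluated on any query $x \in \mathbb{R}^d$ by first computing $\pi(x)$ on the stored $\pi$ and then querying $g$.

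Chaining the guarantees, the cut induced by $f$ satisfies
\[
  \cut_X(X \cap f^{-1}(0))
  \geq (1 - \epsilon) \MaxCut(\pi(X))
  \geq (1 - O(\epsilon)) \MaxCut(X),
\]
which after rescaling $\epsilon$ yields the desired $(1-\epsilon)$-approximation. The total space is dominated by the Frahling-Sohler subroutine, namely $\exp(d')\polylog\Delta' = \exp(\poly(\epsilon^{-1}))\poly(d\log\Delta)$, plus the $\poly(d,\epsilon^{-1},\log\Delta)$ bits needed to represent $\pi$, matching the bound claimed in the corollary.

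The main obstacle will be the discretization step: controlling $\|\pi\|_{\mathrm{op}}$ so that $\pi([\Delta]^d)$ fits in a bounding box of side $\poly(d\Delta)$, and then choosing the grid resolution of $[\Delta']^{d'}$ fine enough that the aggregate snapping error on any cut is $O(\epsilon)\MaxCut(\pi(X))$, yet coarse enough that $\log\Delta'$ remains $\polylog(d\Delta\epsilon^{-1})$. Both requirements are routine for standard (sub-)Gaussian JL matrices, but they are exactly what makes the Frahling-Sohler black box applicable here, so they deserve a careful, if unsurprising, verification.
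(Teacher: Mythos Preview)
Your proposal is correct and follows essentially the same route as the paper: sample a JL map $\pi$ with target dimension $d'=O(\epsilon^{-2}\log\epsilon^{-1})$, feed the stream of images $\pi(x)$ into the Frahling--Sohler oracle, and return $f=g\circ\pi$. Your write-up is actually more careful than the paper's own proof in two respects: you make explicit the snapping of $\pi(X)$ to a discrete grid $[\Delta']^{d'}$ (which the paper's proof omits entirely), and you spell out the chaining of guarantees, implicitly using that the proof of \Cref{thm:maxcut_jl} bounds $\sum_{x,y}|\dist(\pi(x),\pi(y))-\dist(x,y)|$ and hence preserves \emph{every} cut value, not just the maximum one---this is needed to pass from $\cut_{\pi(X)}$ back to $\cut_X$.
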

\begin{proof}
  As noted in~\cite{DBLP:conf/stoc/FrahlingS05}, there exists an algorithm $\mathcal{A}$ that finds an $f$ with the same guarantee and failure probability,
  except that the space usage is $\epsilon^{-O(d)} \cdot \poly(\log \Delta)$.
  Hence, we can use this $\mathcal{A}$ as a black with \Cref{thm:maxcut_jl}
  to conclude the theorem.

  Specifically, let $\pi : \mathbb{R}^d \to \mathbb{R}^{d'}$
  such that $d' = O(\epsilon^{-2}\log(\epsilon^{-1}))$
  be a mapping that satisfies \Cref{thm:maxcut_jl}.
  Then, for every update of point $x \in [\Delta]^d$ in the stream,
  we map it to $\pi(x)$ and feed it to $\mathcal{A}$.
  When the stream terminates, we use $\mathcal{A}$ to compute an $f' : \mathbb{R}^{d'} \to \{0, 1\}$.
  Then, to define the final $f : \mathbb{R}^d \to \{0, 1\}$
  is defined as $f(x) := f'(\pi(x))$.
  This finishes the proof.
\end{proof}

     \end{appendices}
\end{document}